\documentclass[conference,letterpaper]{IEEEtran}
\IEEEoverridecommandlockouts

\def\ba{{\mathbf{a}}}

 \def\bv{{\mathbf{v}}} \def\bw{{\mathbf{w}}} \def\bx{{\mathbf{x}}}
\def\by{{\mathbf{y}}}

\def\bU{{\mathbf{U}}}   \def\bX{{\mathbf{X}}}
\def\bY{{\mathbf{Y}}}

\newcommand{\E}{\mathbb{E}}

\usepackage{bbm}
\usepackage{amsmath}
\usepackage{tcolorbox}
\usepackage{breqn}
\usepackage{cite}
\usepackage{amssymb}
\usepackage{mathtools}
\usepackage{amsthm}
\newtheorem{theorem}{Theorem}

\newtheorem{lemma}{Lemma}
\theoremstyle{remark}
\newtheorem*{rem}{Remark}
\usepackage{accents}
\usepackage{xcolor}

\begin{document}

\title{Stochastic-Adversarial Channels : Online Adversaries  With Feedback Snooping
\thanks{This work was funded in part by the National Science Foundation under grants CNS1642982, CCF1816013, and EEC1941529. }
\thanks{This paper is an extended draft of the conference paper with the same title submitted to the IEEE International Symposium on Information Theory (ISIT) 2021. }
}

\author{\IEEEauthorblockN{Vinayak Suresh, Eric Ruzomberka and David J. Love}
\IEEEauthorblockA{\textit{School of Electrical and Computer Engineering} \\
\textit{Purdue University}\\
Email: suresh20@purdue.edu, eruzombe@purdue.edu, djlove@purdue.edu}
}

\maketitle

\begin{abstract}
The growing need for reliable communication over untrusted networks has caused a renewed interest in adversarial channel models, which often behave much differently than traditional stochastic channel models. Of particular practical use is the assumption of a \textit{causal} or \textit{online} adversary who is limited to causal knowledge of the transmitted codeword. In this work, we consider stochastic-adversarial mixed noise models. In the set-up considered, a transmit node (Alice) attempts to communicate with a receive node (Bob) over a binary erasure channel (BEC)  or binary symmetric channel (BSC) in the presence of an online adversary (Calvin) who can erase or flip up to a certain number of bits at the input of the channel. Calvin knows the encoding scheme and has causal access to Bob's reception through \textit{feedback snooping}. For erasures, we provide a complete capacity characterization with and without transmitter feedback. For bit-flips, we provide interesting converse and achievability bounds. 

\end{abstract}

\section{Introduction}

A central endeavour in information theory is the study of capacity limits and coding strategies for reliable communication over different types of channels. Two different philosophies exist on how channels are modeled. Channels in the Shannon world are characterized by some stochastic process that injects errors independently of the communication scheme, while channels in the Hamming world are characterized by an adversary who injects worst-case errors. Historically, adversarial channels were studied under either full knowledge (\textit{omniscient adversary}) or no knowledge (\textit{oblivious adversary}) of the transmitted codeword. A number of recent works \cite{langberg2009binary,code_online_adv_old,improved_upbounds,dey2013upper,bassily2014causal,chen2015characterization,error_erasure,online_largea} instead consider coding against \textit{online} or \textit{causal} adversaries wherein at any point during the transmission, the adversary knows only part of the codeword transmitted thus far. 

As noted in \cite{dey2013upper}, the causal adversary model lies in between the stochastic and the omniscient adversary models. In this work, we further bridge together the Shannon and the Hamming worlds by studying a new model where both adversarial and random sources of error are present. Specifically, Alice attempts to send a message to Bob over a binary erasure channel BEC($q$)  or binary symmetric channel BSC($q$) in the presence of a causal adversary Calvin who can erase or flip a certain number of bits at the input of the channel. This is depicted in Fig \ref{channelmodel}. Any transmission strategy must not only overcome the noise due to the random channel but also from the adversary. We also assume that Calvin has access to Bob's reception, which we refer to  as \textit{feedback snooping}. The ability to spy on both Alice and Bob aids Calvin in designing strong attacks. Our goal is to characterize the capacity of this channel.

When there is no random channel present, i.e., $q=0$ in Fig. \ref{channelmodel}, the only source of noise is adversarial. A complete capacity characterization for this case is given in \cite{dey2013upper,chen2015characterization,bassily2014causal}. Our models differ from the ones considered previously in two ways:
\begin{itemize}
    \item \textbf{Mixture of random and adversarial noise} - The noise in the received word is affected by the random channel BEC (BSC) as well as the actions of Calvin who is erasing (flipping) bits. For example in the erasure case, a bit not erased by Calvin can be erased by the BEC. Similarly, in the bit-flip case, a bit flipped by Calvin may be ``unflipped" by the BSC. Conceptually, we think of the discrete memoryless channel (DMC) as the main channel through which Alice and Bob communicate, and Calvin as a malicious entity who attempts to disrupt the transmission.

    \item \textbf{Feedback to adversary} - In our setting, Calvin is allowed access to Bob's reception through \textit{feedback snooping}. This becomes important due to the presence of the stochastic channel. The adversarial attacks described in \cite{dey2013upper,bassily2014causal} if used directly do not provide the right distance bounds needed to establish our converse results. These are appropriately strengthened and crucially rely on Calvin's ability to snoop. Note that feedback snooping is unnecessary when $q=0$.
    
\end{itemize}
Our contributions can be summarized as follows:
\begin{itemize}
    \item We provide a complete characterization of capacity for the case of erasures. Our result implies that the presence of the random channel BEC($q$) in addition to adversarial erasures simply scales the capacity by a multiplicative factor. 
    \item For the case of erasures, we also characterize the capacity when Alice has causal access to Bob's reception and encoding is \textit{closed-loop}. In this scenario, we show that Calvin gains no benefit from his ability to spy on Alice or Bob.  In fact,  he can do no better than making erasure decisions in an i.i.d. manner.
     \item Finally in the case of bit-flips, we prove non-trivial converse and achievability bounds. 
\end{itemize}

There are other adversarial models intermediate between the oblivious and omniscient models that have been considered in literature which we do not pursue here (e.g.,  \cite{sufficientlymyopic,dey2010codingdelayed,dey2016bitofdelay,dey2019interplay,zhang2018covert,amita1}). The problem of coding with feedback to the transmitter has been studied by several authors such as \cite{berlekamp1964block,lebedev2016coding,zigangirov1976number}. Finally, we note that our models can be cast under the more general framework of arbitrarily varying channels (AVCs) \cite{csiszarkornerbook,surveyAVC}.  However, known results for AVCs do not directly imply the results in this paper. 

\section{Channel Models}

Consider the channel depicted in Fig.\ref{channelmodel}. Alice (the transmitter) attempts to convey a message to Bob (the receiver) over a BEC($q$), in the presence of a $p$-limited causal adversary (Calvin) where the terms will be clarified shortly. The input and output alphabets are $\mathcal{X} = \lbrace 0,1 \rbrace$ and $\mathcal{Y} = \lbrace 0,1,\Lambda\rbrace, $ respectively, where $\Lambda$ denotes an erasure symbol. Encoding is done over $n$ channel uses, and the size of the message set at the transmitter is $2^{nR}$. We allow stochastic encoding and assume the presence of local randomness available only to Alice for this purpose. Denote $x_k \in \mathcal{X}$ to be the symbol selected by the transmitter at channel use $k$. At time $k$, the adversary makes a decision on whether to erase $x_k$  based on his side-information to be specified later. If Calvin erases $x_k$, the received symbol at time $k$ at the receiver is an erasure, i.e., $y_k = \Lambda$. If Calvin decides not to erase $x_k$, then $y_k = x_k$ with probability $1-q$ and $y_k = \Lambda$ with probability $q$, i.e., $x_k$ is erased with probability $q$. 

We assume that Calvin knows the codebook used at the transmitter in the case of deterministic encoding or the distribution of codewords in the case of stochastic encoding. Calvin is assumed to be \textit{causal}, i.e., at each channel use $k$, he knows only part of the codeword transmitted so far $(x_1,x_2,\cdots,x_{k}) \in \mathcal{X}^{k}$. Calvin is neither aware of the message nor future transmissions. However, he has access to Bob's reception $(y_1,y_2,\cdots,y_{k-1}) \in \mathcal{Y}^{k-1}$ through a delay-free and noise-free causal feedback link as shown in Fig. \ref{channelmodel}. 

A power constraint is further imposed by enforcing Calvin to be $p$-limited, meaning that he can erase at most a constant fraction $p$ of the bits, i.e., if $\ba \in \{ 0,\Lambda \}^n$ denotes the positions where Calvin decides to erase symbols from $(x_1,x_2,\cdots,x_{n})$, we must have $weight(\mathbf{a}) \leq pn$. We refer to this model as \textit{the BEC causal adversarial channel with feedback snooping} (or BEC($q$)-ADV($p$)-FS). Note that the BEC block in Fig. \ref{channelmodel} is slightly different from the classical BEC. If Calvin erases $x_k$ to an erasure symbol $\Lambda$, we have $y_k = \Lambda$, where $\Lambda$ does not carry any information.

\begin{figure}
    \centering
    \includegraphics[width=0.47\textwidth]{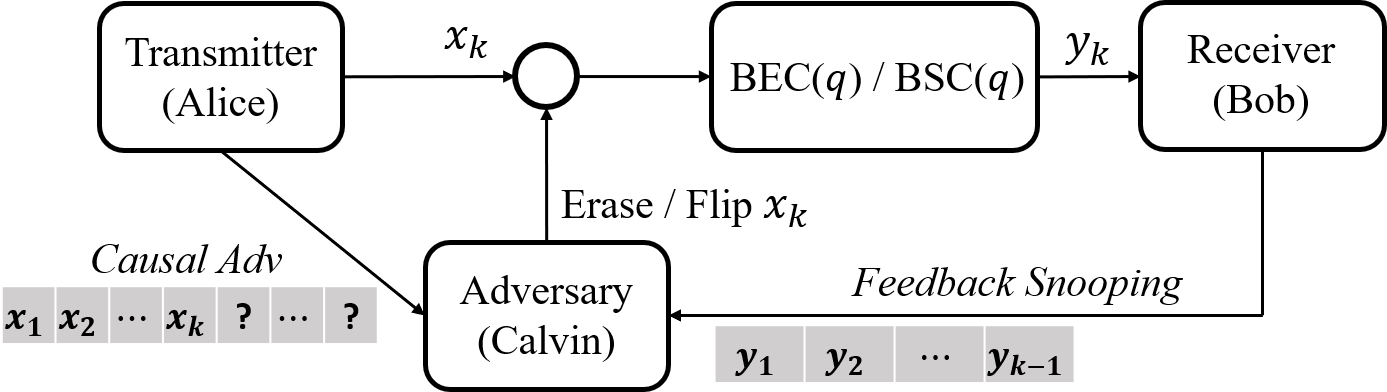}
    \caption{Channel models considered in this work. }
    \label{channelmodel}
\end{figure}

Our aim is to characterize the capacity of this channel, i.e., the largest value of $R$ such that Alice can reliably convey one out of $2^{nR}$ possible messages to Bob. Precise definitions are given shortly. In Section \ref{BFchannel}, we also consider a related channel by replacing the BEC($q$) with a BSC($q$) and letting Calvin flip bits instead of erasing them, denoted henceforth as BSC($q$)-ADV($p$)-FS.

\textit{Notation and Definitions:} In this work, we only consider fixed length encoding. The blocklength is denoted by $n$. The transmitted message is denoted by the random variable (r.v.) $\bU$ chosen uniformly from the message set $\mathcal{U}=\lbrace 1,2,3,\cdots,2^{nR} \rbrace$. A deterministic code consists of a fixed encoder map $\Phi_d:\mathcal{U} \to \mathcal{X}^n$ and a decoder map $\Gamma_d:\mathcal{Y}^n\to \mathcal{U}$, where each message is associated to a unique codeword. In case of stochastic encoding, a codeword $\bx$ is selected for a message $u$ according to a chosen conditional distribution $\Phi(.|u)$ defined on $\mathcal{X}^n$. A stochastic code is fully specified by defining all conditional distributions $\left \lbrace \Phi(.|u) \right \rbrace_{u\in \mathcal{U}}$ and a decoder $\Gamma : \mathcal{Y}^n \to \mathcal{U}$. Without loss of generality, we assume in proving converse results that no two distinct messages map to the same codeword. The (maximum) probability of error is then 
\begin{equation}\label{maxerrorprob}
    P_e = \max_{u \in \mathcal{U}}\max_{\text{ADV($p$)}}\sum_{\by}\sum_{\bx} P(\by | \bx) \Phi(\bx|u) \mathbbm{1}(\Gamma(\by)\neq u)
\end{equation}
where $\mathbbm{1}(.)$ denotes the indicator function and ADV($p$) denotes a feasible strategy chosen by Calvin. Note that $P(\by|\bx)$ in \eqref{maxerrorprob} is a function of both the stochastic channel and the chosen adversarial strategy. We say that $R>0$ is achievable if for every $\delta>0$ and every sufficiently large $n$, there is a code of rate $R$ and blocklength $n$ with $P_e < \delta$. The capacity is defined to be the supremum of all achievable rates. Let Ber($q$) denote a Bernoulli r.v. with success probability $q$. For $x,y\in[0,1/2]$, let $x \star y = x(1-y)+y(1-x)$ and note that $x \star y = 1/2$ iff either $x=1/2$ or $y=1/2$ (or both).

\section{Results for Erasures}

\subsection{No Transmitter Feedback}
Denote by $C^E(p,q)$ the capacity of BEC($q$)-ADV($p$)-FS when Alice has no side-information, i.e., encoding is restricted to be \textit{open-loop}. We prove the following result.

\begin{theorem}\label{thm_er}
The capacity $C^E(p,q)$ of BEC($q$)-ADV($p$)-FS is given by
\begin{equation}\label{drloveresult_er}
    C^E(p,q) =     \begin{cases}
    (1-2p)(1-q) & \text{ for $0\leq p\leq \frac{1}{2}, ~ 0 \leq q \leq 1$} \\
    0 & \text{otherwise}
    \end{cases}.
\end{equation}
\end{theorem}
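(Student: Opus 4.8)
My plan is to prove the two halves separately. For $p \ge 1/2$, the converse is immediate: Calvin can erase (in an i.i.d.\ sense, or simply erase a $1/2$ fraction chosen after snooping) enough symbols that Bob sees a vanishing amount of information; more carefully, a $1/2$-limited online adversary against a BEC already reduces capacity to zero by the standard argument that he can erase exactly those positions that would otherwise disambiguate two confusable codewords. So the interesting regime is $0 \le p < 1/2$, $0 \le q \le 1$, where I must show both $C^E(p,q) \le (1-2p)(1-q)$ and $C^E(p,q) \ge (1-2p)(1-q)$.

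For the \emph{achievability} direction, the natural approach is a two-step reduction. First, note that an online erasing adversary who is $p$-limited is no stronger than an omniscient one, and for the pure adversarial erasure channel ($q=0$) the capacity $1-2p$ is classical (a $p$-fraction adversary forces a $2pn$ Hamming-distance--type requirement, giving rate $1-2p$ via a Singleton/Plotkin-type bound matched by random or Reed--Solomon-like codes). The presence of the BEC($q$) on top should multiply this by $(1-q)$: intuitively, of the $(1-2p)n$ ``effective'' symbols that survive the adversary's ambiguity, a fraction $q$ are independently erased by the channel, and since these erasures are oblivious and i.i.d., a standard random-coding argument (or concatenation of an outer adversarial-erasure code with an inner BEC code) recovers rate $(1-2p)(1-q)$. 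Concretely, I would use random coding: draw $2^{nR}$ codewords i.i.d.\ uniform on $\{0,1\}^n$; any adversarial erasure pattern of weight $\le pn$ combined with a typical BEC pattern leaves roughly $(1-p-q+pq)n$ unerased positions to Bob, but because the adversary is online and must commit before seeing the BEC realization on future symbols, the list-decoding / confusability analysis yields a decodable rate up to $(1-2p)(1-q)$.

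For the \emph{converse}, this is where the real work lies, and the excerpt flags it: the paper says the attacks of \cite{dey2013upper,bassily2014causal} must be ``appropriately strengthened'' and ``crucially rely on Calvin's ability to snoop.'' So the plan is a ``wait-then-strike'' / ``babble-and-push'' style argument. Calvin picks a random time $\tau \approx$ (some fraction) $n$; in the first phase he does nothing (or erases nothing), merely listening to $(x_1,\dots,x_\tau)$ and, via feedback snooping, to $(y_1,\dots,y_{\tau-1})$; from this he identifies a set of messages consistent with the prefix and picks a ``fake'' message whose codeword agrees with the true one on the observed prefix. In the second phase, using his remaining erasure budget, he erases positions so that Bob cannot distinguish the true message from the fake one. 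The feedback snooping is essential because the BEC($q$) has already randomly erased some prefix symbols, so Calvin needs to know \emph{which} symbols Bob actually received in order to choose a fake codeword that is genuinely confusable given Bob's \emph{realized} view, not just given the transmitted prefix. Quantitatively, one shows that whenever $R > (1-2p)(1-q)$, with constant probability Calvin can find such a confusable pair and force a roughly $1/2$ error probability, contradicting achievability. The main obstacle will be making the counting rigorous: carefully tracking the interaction between the random BEC erasures in the prefix, the adversary's budget split between the two phases, and the size of the consistent-message list, and verifying that the threshold that falls out is exactly $(1-2p)(1-q)$ rather than something weaker. A clean way to organize this is to first prove a ``distance-type'' lemma — that any good code must have the property that for most prefixes, conditioned on a typical BEC pattern, the remaining codeword suffixes are spread out enough that no $2pn$-budget erasure in the suffix can collapse two of them — and then contrapositively derive the rate bound.
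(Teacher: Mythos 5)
Your converse plan is essentially the paper's argument: Calvin stays silent on a prefix, uses feedback snooping to learn which prefix symbols the BEC($q$) actually delivered to Bob, samples a fake codeword consistent with Bob's realized view, and spends his budget erasing the suffix positions where the true and fake codewords disagree; the quantitative part (an entropy/Markov argument showing Bob's conditional uncertainty is large with probability $\Omega(\epsilon)$, plus a Plotkin-type average-distance bound showing the confusable pair is within distance $pn$ with probability $\epsilon^{\mathcal{O}(1/\epsilon)}$) is exactly the ``wait and snoop, then push'' attack, so no objection there beyond the counting you acknowledge is still to be done.

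The genuine gap is in achievability. Your first reduction is incorrect: against an \emph{omniscient} erasing adversary the binary capacity is \emph{not} $1-2p$ and is not ``classical''--unique decoding from $pn$ adversarial erasures with a deterministic code requires relative minimum distance $p$, for which $1-2p$ (a Plotkin-type bound) is only an unachievable upper bound; random binary codes give the Gilbert--Varshamov rate $1-h_2(p)\le 1-2p$, and Reed--Solomon-type constructions live on large alphabets. The rate $1-2p$ at $q=0$ is precisely the nontrivial causal-adversary result of \cite{bassily2014causal,chen2015characterization}, and it cannot be reached by weakening the adversary to an omniscient one; it requires stochastic encoding (private randomness at Alice) and a two-phase decoder that list-decodes a prefix and then refines the list using the suffix, exploiting the fact that the adversary must commit before seeing the future. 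Second, the ``multiply by $(1-q)$'' layering is not automatic even granting the $q=0$ result: the paper shows that if Bob simply treats every observed erasure as adversarial when choosing his decoding point, the scheme of \cite{chen2015characterization} achieves only $C^E(p+q-pq,0)=C^E(p,q)-q$, strictly below capacity. The missing idea is that Bob must work with the estimate $\lambda_{t}-qt$ of the adversarial erasures that do not coincide with BEC erasures, and the list-decoding and energy-bounding thresholds must be rescaled by $(1-q)$ accordingly; one must then re-verify that a valid decoding point exists, that Calvin's leftover budget in the suffix is too small to collapse distinct list members, and that the residual $\Theta(\theta)(n-t^*)$ distinguishing positions survive the BEC with high probability. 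Neither a plain i.i.d.\ random (deterministic) codebook nor a naive concatenation of an ``adversarial-erasure code'' with a BEC code delivers these properties against an adaptive, snooping adversary, so as written your achievability argument does not establish $C^E(p,q)\ge(1-2p)(1-q)$.
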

\begin{rem}
When there is no BEC, i.e., when $q=0$, our model reduces to the one studied in \cite{dey2013upper, chen2015characterization}. Our result implies that in the setting where both causal adversarial erasures and random erasures are present, the capacity simply scales by a factor of $1-q$. 
\end{rem}

\begin{proof}

\textbf{Converse :}
The proof of converse is based on a \textit{wait and snoop, then push} attack inspired by, but different from, an attack in \cite{langberg2009binary,bassily2014causal}. Let the transmitted and the received codewords be denoted by $\bx$ and $\by$ respectively. Let $\bx_1=(x_1,x_2,\cdots,x_\ell)$ and $\bx_2=(x_{\ell+1},\cdots,x_n)$, where $\ell$ is specified shortly. Similarly, let $\by_1=(y_1,y_2,\cdots,y_\ell)$ and $\by_2=(y_{\ell+1},\cdots,y_n)$. 

Suppose Alice attempts to communicate at a rate $R= C^E(p,q)+\epsilon=(1-2p)(1-q) + \epsilon$. We will show that for sufficiently large block-length $n$, the probability of decoding error under the proposed attack is lower bounded by a constant that is only a function of $\epsilon$ (and independent of $n$). The two phases of the attack are: 
\begin{itemize}
    \item \textbf{Wait and Snoop}: Calvin waits and does not induce any erasures for the first $\ell = n\frac{R-\frac{\epsilon}{2}}{1-q}$ channel uses. Instead, Calvin simply snoops into Bob's reception to determine the erased/unerased bits and their positions. At the end of this phase, Bob receives $\by_1=(y_1,y_2,\cdots,y_{\ell})$ containing some erased and some unerased bits. Note that the erasures in this phase occur purely due to the BEC($q$) channel. Let $\lbrace i_j \rbrace_{j=1}^m$ be the indices of symbols in $\by_1$ that remain unerased. Here, $m$ is a random quantity in accordance to the erasure distribution from the BEC($q$). 
    \item \textbf{Push}: Calvin forms the set $\mathcal{B}_{\by_1}$ of codewords consistent with $\by_1$ as 
\begin{multline}\label{Econfset}
    \mathcal{B}_{\by_1} = \lbrace \bv \in \mathcal{X}^n : \exists \tilde{u} \in \mathcal{U} \text{ s.t. } \Phi(\bv|\tilde{u})>0 \text{ and } \\  v_{i_k} = x_{i_k} ~ k = 1,2,\cdots, m  \rbrace,
\end{multline}
where $\Phi(.|u)$ is the distribution of codewords when message $u$ is to be transmitted. In other words, $\mathcal{T}_{\by_1}$ consists of all possible codewords that align with $\by_1$ at the positions that are unerased.
Calvin then samples a codeword $\bx'$ from $\mathcal{B}_{\by_1}$ according to the distribution $P_{\bX|\bY_1=\by_1}(.|\by_1)$. In the push phase, Calvin simply erases bit $x_i$ whenever $x_i \neq x_i'$. To complete the proof, it suffices to show that $\bx$ and $\bx'$ correspond to different messages $u$ and $u'$ and that $d(\bx_2,\bx_2')<pn$ with a probability \textit{independent} of $n$. This way there is no way for Bob to distinguish between messages $u$ and $u'$. This is illustrated in Fig. \ref{push_erasure_fig}.
\end{itemize}

\begin{figure}
    \centering
    \includegraphics[width=0.45\textwidth]{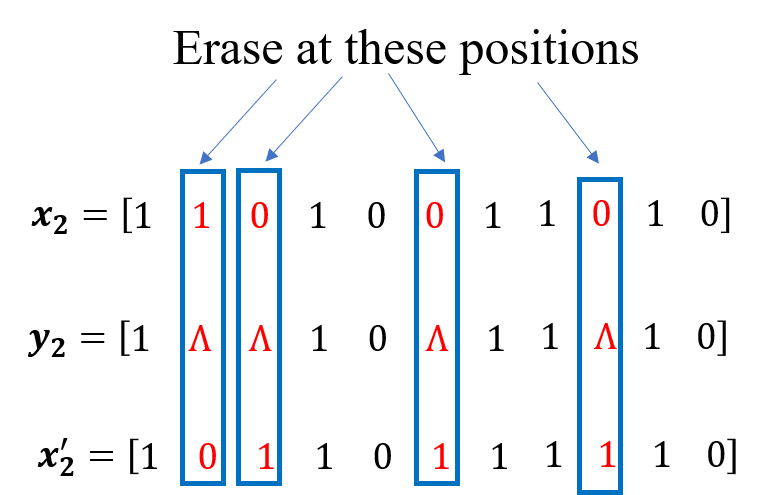}
    \caption{In the push phase, if $\bx_2$ and $\bx_2'$ are sufficiently close (within distance $pn$), Calvin can make Bob completely uncertain whether the transmitted codeword was $\bx$ or $\bx'$.}
    \label{push_erasure_fig}
\end{figure}

Note that while the presence of the BEC($q$) lowers the target rate, Calvin adds no erasures for approximately $n(1-2p)$ channel uses which from \cite{bassily2014causal,chen2015characterization} is optimal when there is no BEC($q$). The main difference in attack when $q\neq 0$ is that even though Calvin knows the entire prefix of the transmitted codeword $\bx_1 = (x_1,x_2,\cdots,x_{\ell})$, he forms his set in \eqref{Econfset} based only on the unerased bits. Thanks to feedback snooping, Calvin exploits the additional equivocation induced by the BEC($q$) in the wait and snoop phase to pick a codeword that is sufficiently close to the transmitted codeword, and which corresponds to a message different from one that Alice chose. Note also that while we give Calvin full causal access to Bob's reception, an alternate model where Calvin is allowed \textit{one-time block feedback} is sufficient - he would add no erasures for $\ell$ channel uses, retrieve through feedback the entire block $\by_1$ and then `push'.

The proof steps are similar to section A from \cite{bassily2014causal} except that we account for the presence of the BEC($q$) in our claims. Define the set $
    A_0 = \left \lbrace  \by_1 :  H(\bU \mid \bY_1 = \by_1) > \frac{n\epsilon}{4}  \right \rbrace
$ and the event $E_1=\lbrace \bY_1 \in A_0 \rbrace$. We have the following lemma.

\begin{lemma}
     $P(E_1 )\geq \frac{\epsilon}{4}$.
\end{lemma}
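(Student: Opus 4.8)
The plan is to first lower bound the average residual uncertainty $H(\bU \mid \bY_1)$, and then upgrade this to a high-probability statement via a reverse Markov argument. Since $\bU$ is uniform on $\mathcal{U}$ we have $H(\bU) = \log|\mathcal{U}| = nR$, so it suffices to bound $I(\bU;\bY_1)$ from above by roughly $n(R-\tfrac{\epsilon}{2})$.

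For that bound, I would model $\bY_1$ as a deterministic function of $\bX_1$ and the BEC erasure mask $\bZ \in \{0,1\}^\ell$ on the first $\ell$ coordinates, whose entries are i.i.d.\ Ber($q$). Because Calvin induces no erasures during the wait-and-snoop phase, $\bZ$ is independent of $(\bU,\bX_1)$, so $I(\bU;\bZ)=0$ and hence
\[
    I(\bU;\bY_1) \;\le\; I(\bU;\bY_1,\bZ) \;=\; I(\bU;\bZ) + I(\bU;\bY_1\mid\bZ) \;\le\; H(\bY_1\mid\bZ).
\]
Conditioned on $\bZ=\bz$ with $m(\bz)$ unerased positions, $\bY_1$ is determined by the $m(\bz)$ corresponding (binary) coordinates of $\bX_1$ together with the $\Lambda$'s elsewhere, so $H(\bY_1\mid\bZ=\bz)\le m(\bz)$; averaging over $\bZ$ gives $H(\bY_1\mid\bZ)\le \E[m(\bZ)] = \ell(1-q)$. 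With the choice $\ell = n\frac{R-\epsilon/2}{1-q}$ this equals exactly $n(R-\tfrac{\epsilon}{2})$, so $H(\bU\mid\bY_1) \ge nR - n(R-\tfrac{\epsilon}{2}) = \tfrac{n\epsilon}{2}$.

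Finally, set $W(\by_1) = H(\bU\mid\bY_1=\by_1)$, a nonnegative quantity bounded above by $\log|\mathcal{U}|=nR$ whose $\bY_1$-average is $H(\bU\mid\bY_1)\ge \tfrac{n\epsilon}{2}$. Splitting this expectation according to whether $W \le \tfrac{n\epsilon}{4}$ yields $\tfrac{n\epsilon}{2} \le \tfrac{n\epsilon}{4} + nR\cdot P(E_1)$, i.e.\ $P(E_1) \ge \tfrac{\epsilon}{4R} \ge \tfrac{\epsilon}{4}$, where the last step uses $R\le 1$ (which we may assume, since rates above $1$ are not achievable over a binary-input channel regardless of the adversary). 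The one substantive step is the mutual-information bound $I(\bU;\bY_1)\le n(R-\tfrac{\epsilon}{2})$: it is precisely here that the definition of $\ell$ enters, and it relies on the unerased part of $\bY_1$ having expected length $\ell(1-q)$ and on the independence of the BEC mask from the message; everything else is bookkeeping plus the standard averaging argument, the only minor care being the $R\le 1$ normalization at the end.
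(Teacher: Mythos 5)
Your proposal is correct and follows essentially the same route as the paper: bound $I(\bU;\bY_1)\le \ell(1-q)=n(R-\epsilon/2)$ using that Calvin is silent in the first phase, conclude $H(\bU\mid\bY_1)\ge n\epsilon/2$, and then lift this to $P(E_1)\ge\epsilon/4$ by an averaging argument (your expectation split is just Markov's inequality applied to $nR-H(\bU\mid\bY_1=\by_1)$, which is what the paper does). Your explicit conditioning on the erasure mask is a slightly more careful justification of the $\ell(1-q)$ bound than the paper's one-line appeal to the BEC, but it is not a different argument.
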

\begin{proof}
Since $\bU\to \bX_1 \to \bY_1$ is a Markov chain, by the data processing inequality, we have
\begin{equation*}
    I(\bU;\bY_1) \leq I(\bX_1,\bY_1) = \ell (1-q) = n(R - \epsilon /2).
\end{equation*}
The above holds since Calvin adds no erasures in the wait and snoop phase and the channel between $\bX_1$ and $\bY_1$ is a BEC($q$). Now, since $H(\bU) = nR$, we have
\begin{equation*}
    H(\bU|\bY_1) =  \mathbb{E}_{\bY_1} H(\bU|\bY_1 = \by_1) =  H(\bU) - I(\bU;\bY_1) \geq n\epsilon / 2.
\end{equation*}
By Markov's inequality then,
\begin{equation*}
    P\left ( nR - H(\bU|\bY_1 = \by_1) > nR - n\epsilon/4\right ) \leq 1 - \frac{\epsilon/4}{R - \epsilon/4}
\end{equation*}
which gives as desired,
\begin{equation*}
   P(E_1) = P\left ( H(\bU \mid \bY_1 = \by_1) > \frac{n\epsilon}{4}  \right ) \geq \frac{\epsilon}{4}.
\end{equation*}
\end{proof}

Now let $E_2$ be the event $\lbrace \bU \neq \bU'\rbrace $ and $E_3$ be the event $\lbrace d(\bX_2,\bX_2')<pn \rbrace$. First, we will show that for $\by_1 \in A_0$, $P(E_2,E_3 \mid \lbrace \bY_1=\by_1\rbrace) \geq \epsilon^{\mathcal{O}(1/\epsilon)}$. To that end, consider sampling $t=\frac{9}{\epsilon}$ codewords $\mathcal{C}_t = \left \lbrace \bX^{(1)},\bX^{(2)},\cdots,\bX^{(t)} \right \rbrace$ from the set $\mathcal{B}_{\by_1}$ where each codeword is sampled according to the conditional distribution $P_{\bX|\bY_1=\by_1}(.|\by_1)$. Let the messages corresponding to the codewords be $\bU_1,\bU_2,\cdots,\bU_t$ and let $E_4$ be the event that $\left \lbrace \bU_1, \bU_2, \cdots \bU_t \text{ are all distinct}\right \rbrace$ i.e. all of the codewords are distinct. We have from proposition 1, section A.2 from \cite{bassily2014causal} the following. 

\begin{lemma} \cite{bassily2014causal}\label{er_distlemma}
For $\by_1 \in A_0$ and for sufficiently large block length $n$,
$$P(E_4 \mid  \bY_1 = \by_1 ) \geq \left (\frac{\epsilon}{5}\right )^{t-1}.$$
\end{lemma}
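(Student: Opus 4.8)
The plan is to turn the defining property of $A_0$---large posterior entropy of the message---into the statement that no $O(1)$-size set of messages can carry almost all of the posterior mass, and then to apply a birthday-type chain-rule estimate to the $t$ i.i.d.\ samples; this essentially mirrors the argument of \cite{bassily2014causal}, written so that the dependence on $\epsilon$ is tracked explicitly. I would first fix $\by_1 \in A_0$ and let $p_u := P_{\bU|\bY_1=\by_1}(u)$ be the induced posterior on messages. Because in the converse setting distinct messages have disjoint codeword supports, drawing a codeword according to $P_{\bX|\bY_1=\by_1}(\cdot|\by_1)$ and reading off its message is the same as drawing $\bU$ with law $(p_u)_u$; hence $\bU_1,\dots,\bU_t$ are i.i.d.\ with law $(p_u)_u$, and by definition of $A_0$, $\sum_u p_u \log \tfrac{1}{p_u} = H(\bU \mid \bY_1 = \by_1) > \tfrac{n\epsilon}{4}$.

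The crux is the claim that \emph{for every set $S$ of messages with $|S| \le t-1$ and every sufficiently large $n$, one has $P(\bU \notin S \mid \bY_1 = \by_1) = 1 - \sum_{u\in S}p_u \ge \tfrac{\epsilon}{5}$.} To prove it I would argue by contradiction: suppose $\sum_{u \in S}p_u = 1 - \gamma$ with $\gamma < \tfrac{\epsilon}{5}$. Splitting the entropy sum over $S$ and $S^c$, renormalizing each piece to a probability vector, and bounding the two conditional entropies by $\log|S| \le \log(t-1)$ and $\log|S^c| \le \log|\mathcal{U}| = nR \le n$ respectively, one gets
\begin{equation*}
H(\bU \mid \bY_1 = \by_1) \;\le\; h(\gamma) + (1-\gamma)\log(t-1) + \gamma n \;\le\; 1 + \log(t-1) + \gamma n ,
\end{equation*}
where $h(\cdot)$ is the binary entropy function. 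Since $t = 9/\epsilon$ is a fixed constant, $1 + \log(t-1) < \tfrac{\epsilon}{20}n$ for all large $n$, so the right-hand side is strictly below $\tfrac{\epsilon}{20}n + \tfrac{\epsilon}{5}n = \tfrac{\epsilon}{4}n$, contradicting $\by_1 \in A_0$ and establishing the claim.

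To finish, I would let $D_j$ be the event (under the law conditioned on $\bY_1 = \by_1$) that $\bU_1,\dots,\bU_j$ are all distinct, so $D_1$ is certain and $D_t = E_4$. On $D_j$ the set $\{\bU_1,\dots,\bU_j\}$ has exactly $j \le t-1$ elements, so conditioning on $(\bU_1,\dots,\bU_j)$ and using that $\bU_{j+1}$ is independent of it with law $(p_u)_u$, the claim gives $P(D_{j+1} \mid \bY_1 = \by_1) \ge \tfrac{\epsilon}{5}\,P(D_j \mid \bY_1 = \by_1)$. Iterating this inequality for $j = 1,\dots,t-1$ and using $P(D_1 \mid \bY_1 = \by_1) = 1$ yields $P(E_4 \mid \bY_1 = \by_1) \ge (\epsilon/5)^{t-1}$.

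I expect the crux claim to be the only genuinely substantive step, and even there the work is just bookkeeping: one must choose the slack so that the additive constant and the $\log(t-1)$ term---both $o(n)$---are absorbed while the final bound stays strictly below $\tfrac{\epsilon}{4}n$, which is exactly what the split $\tfrac{\epsilon}{5} + \tfrac{\epsilon}{20} = \tfrac{\epsilon}{4}$ above achieves. Everything else is the routine observation that a handful of i.i.d.\ samples from a sufficiently spread-out distribution are unlikely to collide.
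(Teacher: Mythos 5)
Your proof is correct, and it reconstructs essentially the argument the paper imports by citation from \cite{bassily2014causal}: the grouping (Fano-type) entropy bound showing that no set of $O(1/\epsilon)$ messages can capture more than $1-\epsilon/5$ of the posterior mass when $H(\bU\mid \bY_1=\by_1)>n\epsilon/4$, followed by the sequential chaining over the $t$ i.i.d.\ samples. The only cosmetic caveat is the step $nR\le n$, which is harmless since rates above $1$ are trivially non-achievable for a binary-input channel.
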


The average Hamming distance between the suffixes of codewords in $\mathcal{C}_t$ is defined as 
\begin{equation*}
    d_{avg}(\mathcal{C}_t) = \frac{1}{t(t-1)} \sum_{i\neq j} d_H\left ( \bX_2^{(i)} ,\bX_2^{(j)}\right ).
\end{equation*}
Conditioning on $E_5$, Plotkin's bound dictates that
\begin{equation*}
\begin{split}
       d_{avg}(\mathcal{C}_t) \leq \frac{1}{2}\frac{t}{t-1}(n-\ell) &= n \frac{t}{t-1}\left ( p - \frac{\epsilon}{4(1-q)}\right ) \\
       &\leq  n \frac{\frac{9}{\epsilon}}{\frac{9}{\epsilon}-1}\left ( p - \frac{\epsilon}{4}\right ) \leq np - n\frac{\epsilon}{8}.
\end{split}
\end{equation*}
Thus for $\by_1 \in A_0$, we have
\begin{equation*}
    \mathbb{E}(d_{avg}(\mathcal{C}_t) \mid E_4, \bY_1 = \by_1 ) \leq np - n\epsilon/8.
\end{equation*}
Now, since all of the $\bX^{(i)}$'s are picked independently, all pairs $(\bX^{(i)}, \bX^{(j)})$ have identical distribution. Thus,
\begin{multline*}
   \mathbb{E}(d_{avg}(\mathcal{C}_t) \mid E_4, \bY_1 = \by_1 ) =  \\ \mathbb{E}(d_H(\bX_2^{(1)},\bX_2^{(2)}) \mid E_4, \bY_1 = \by_1 )    
\end{multline*}
and also
\begin{multline*}
 \mathbb{E}(d_H(\bX_2^{(1)},\bX_2^{(2)}) \mid E_4, \bY_1 = \by_1 ) = \\ \mathbb{E}(d_H(\bX_2,\bX_2') \mid E_4, \bY_1 = \by_1 ).
\end{multline*}
Thus, we have
\begin{equation*}
    \mathbb{E}(d_H(\bX_2^{(1)},\bX_2^{(2)}) \mid E_4, \bY_1 = \by_1 ) \leq  np - n\epsilon/8
\end{equation*}
and by Markov's inequality
\begin{equation}\label{er_markov}
    P(d_H(\bX_2^{(1)},\bX_2^{(2)})>np \mid E_4, \bY_1 = \by_1 ) \leq 1 - \frac{\epsilon}{8p}.
\end{equation}
We have also, 
\begin{multline*}
    P(E_2,E_3 \mid \bY_1 = \by_1 ) = \\ P(d(\bX_2^{(1)},\bX_2^{(2)}) \leq pn, \bU_1 \neq \bU_2 \mid  \bY_1 = \by_1 ) \\ \geq     P(d(\bX^{(1)}_2,\bX^{(2)}_2)  \leq pn, E_4 \mid \bY_1 = \by_1 ).
\end{multline*}
where the last inequality holds because event $E_4$ is a subset of the event $\lbrace U_1\neq U_2 \rbrace$. We then have,
\begin{multline*}
     P(d(\bX^{(1)}_2,\bX^{(2)}_2)\leq pn, E_4 \mid \bY_1 = \by_1 ) = \\
     P(d(\bX^{(1)}_2,\bX^{(2)}_2) \leq pn \mid E_4, \bY_1 = \by_1 ) P(E_4\mid \bY_1 = \by_1  ).
\end{multline*}
From Lemma \ref{er_distlemma} and \eqref{er_markov}, when $E_1$ occurs i.e. $\by_1 \in A_0$, we get,
\begin{equation*}
        P(E_2,E_3 \mid \bY_1 = \by_1 ) \geq \frac{\epsilon}{8p}\left (\frac{\epsilon}{5}\right )^{\frac{9}{\epsilon}-1} = \epsilon^{\mathcal{O}(1/\epsilon)}
\end{equation*}
as we set out to prove.

Recall that $E_2$ is the event that the message $\bU'$ picked by the adversary is different from the one transmitted and $E_3$ is the event that the corresponding codewords $\bX_2$ and $\bX_2'$ are close enough so that Calvin's push phase succeeds and Bob is completely uncertain whether the message transmitted was $\bU$ or $\bU'$. Hence when $E_2$ and $E_3$ occur, the probability of decoding error is at least $1/2$. To finish the proof, we need only show a lower bound on $P(E_2,E_3)$. We have,
\begin{equation*}
\begin{split}
P(E_2,E_3) &\geq P(E_2,E_3,E_1) \\
&= \sum_{\by_1 \in A_0}  P(E_2,E_3 \mid \bY_1 = \by_1 ) P(\bY_1=\by_1) \\
&\geq \frac{\epsilon}{8p}\left (\frac{\epsilon}{5}\right )^{\frac{9}{\epsilon}-1}\sum_{\by_1 \in A_0}P(\bY_1=\by_1) \\
&= \frac{\epsilon}{8p}\left (\frac{\epsilon}{5}\right )^{\frac{9}{\epsilon}-1} P(E_1)\\
&\geq \frac{\epsilon}{4} \frac{\epsilon}{8p}\left (\frac{\epsilon}{5}\right )^{\frac{9}{\epsilon}-1},
\end{split}
\end{equation*}
a lower bound that is independent of $n$, hence completing the proof.

\begin{figure}
    \centering
    \includegraphics[width=0.48\textwidth]{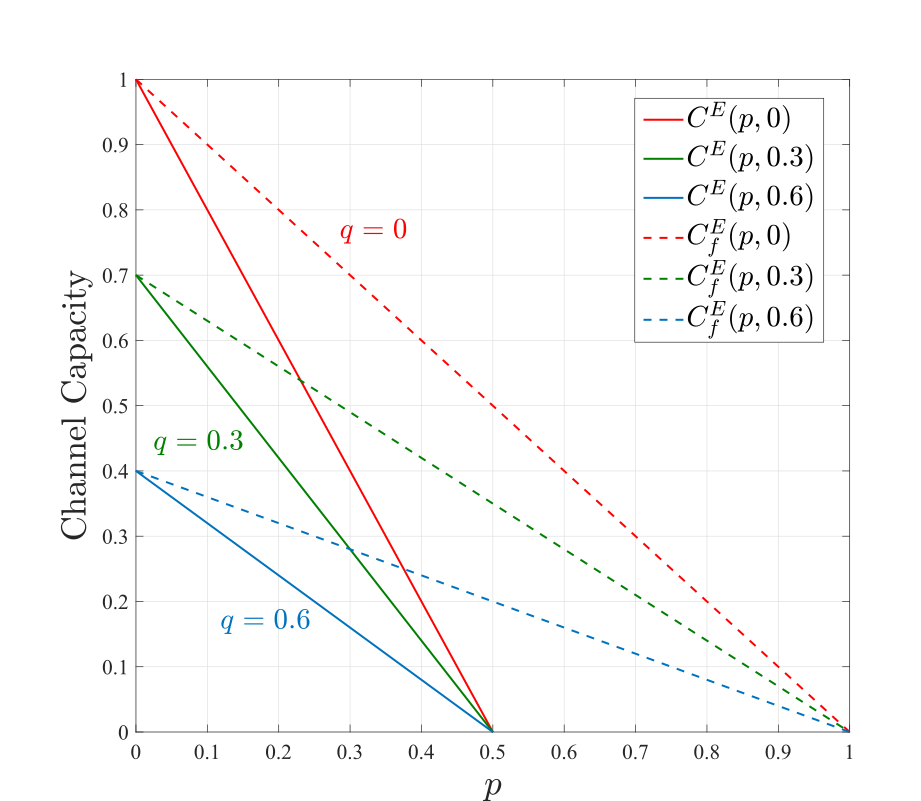}
    \caption{Capacity of BEC($q$)-ADV($p$)-FS with ($C_f^E(p,q))$ and without transmitter feedback (resp. $C^E(p,q))$ as a function of $p$ for $q=0,0.3,0.6$. The cut-off value of $p$ beyond which $C^{E}(p,q)=0$ is $p=1/2$ independent of $q$.}
    \label{BECupperbounds}
\end{figure}

\textbf{Achievability}: We resort to a random coding argument to claim existence of a stochastic code that achieves \eqref{drloveresult_er}. Let $R= C^E(p,q)-\epsilon=(1-2p)(1-q) - \epsilon$. Our construction is a modification of the encoder and decoder described in \cite{chen2015characterization}, which was used to prove (tight) achievability when $q=0$. We begin by briefly reviewing the encoder and decoder of \cite{chen2015characterization}. While reviewing, we provide key insights into how this decoder might fail once a BEC is added to the channel model. Following the review, we use our insights to modify the decoder in order to account for the additional random noise when $q>0$. Alice has a set of private secret keys $\mathcal{S}$ she uses for (stochastic) encoding. The encoder and decoder of \cite{chen2015characterization} is constructed as follows (here, $q=0$): 
\begin{itemize}
\item \textbf{Encoder:} A message $u$ is mapped to $1/\theta$ sub-codewords or chunks, each of length $n \theta$, where $\theta=\frac{\epsilon}{4}$ is a quantization parameter. Each of the sub-codewords is obtained by a stochastic code and the secrets between chunks are chosen independently. The sub-codewords are then concatenated together to form the transmitted word. Further technical details are in the appendix.

 \item \textbf{Decoder:} Decoding begins after Bob receives the entire $n$-symbol channel output $\mathbf{y}$. For some integer $t^*$, Bob partitions $\mathbf{y}$ into 2 strings: $\mathbf{y}_{1} = (y_1, \ldots, y_{t^*})$ and $\mathbf{y}_{2}= (y_{t^*+1}, \ldots, n)$. Decoding occurs in two sequential phases. In the first phase, Bob performs list decoding on $\mathbf{y}_1$ to create a list of messages $\mathcal{L}$. In the second phase, he refines the list by removing all messages in $\mathcal{L}$ that are not consistent with $\mathbf{y}_2$. Here, a message $u'$ is said to be consistent with $\mathbf{y}_2$ iff some codeword corresponding to $u'$ agrees with $\mathbf{y}_2$ on the unerased bits. If exactly one message, say $\hat{u}$, remains in $\mathcal{L}$ after refinement, the decoder outputs $\hat{u}$. If the refined list does not contain exactly one message, a decoding error is declared. Decoding is successful if $\hat{u} = u$.

\end{itemize}
The integer $t^*$ is a decoding point which indicates which part of the channel output is used for list decoding (phase 1) and which part is used for list refinement (phase 2). The authors in \cite{chen2015characterization} show that Bob can choose $t^*$ such that decoding is successful w.h.p.. Here, $t^*$ is chosen as a function of the number of (purely adversarial) erasures $\lambda^{a}_{t^*}$ observed in $\mathbf{y}$ up until time $t^*$. Specifically, Bob chooses $t^*$ as the smallest integer that satisfies the so-called \textit{list-decoding condition}
\begin{equation} \label{eq:phase1_cond}
\lambda^{a}_{t^*} \leq t^*(1-\theta) - ((1-2p)-\epsilon)n
\end{equation}
and the \textit{energy bounding condition} 
\begin{equation} \label{eq:phase2_cond}
np - \lambda^{a}_{t^*} \leq \frac{(n-t^*)(1-\theta)}{2}.
\end{equation}
Condition (\ref{eq:phase1_cond}) ensures the size of $\mathcal{L}$ is small (at most a constant) while condition (\ref{eq:phase2_cond}) ensures the fraction of erasures that occur in $\mathbf{y}_2$ is small enough to perform list refinement. 

Problems in this construction arise when $q>0$. If the decoder assumes that all erasures that he sees are adversarial and performs decoding by selecting $t^*$ according to conditions \eqref{eq:phase1_cond} and \eqref{eq:phase2_cond}, the maximum rate that can be achieved is $C^E(p+q-pq,0) = C^E(p,q)-q$ which is strictly less than capacity. Therefore, simply counting erasures without knowing (or estimating) their source is no longer a viable strategy when $q>0$. 

To circumvent the issues described above, we modify conditions \eqref{eq:phase1_cond} and \eqref{eq:phase2_cond} appropriately. Let $\lambda_t$ denote the number of erasures observed by Bob up until time $t$, which includes contributions both from Calvin and the BEC($q$). Then, Bob chooses $t^*$ as the smallest integer that satisfies the modified list-decoding condition
\begin{equation} \label{eq:phase1_cond_m}
{\lambda}_{t^*} - qt^*  \leq t^*(1-q)(1-\theta) - Rn 
\end{equation}
and the modified list refinement condition
\begin{equation} \label{eq:phase2_cond_m}
np(1-q) - ({\lambda}_{t^*}- qt^*)  \leq \frac{(n-t^*)(1-q)(1-\theta)}{2}.
\end{equation}
Note that if Calvin adds $\lambda^{a}_{t^*}$ erasures up until $t^*$, the total number of erasures  ${\lambda}_{t^*}$ that Bob observes is approximately ${\lambda}_{t^*} \approx \lambda^{a}_{t^*} + q(t^* - \lambda^{a}_{t^*})$. On making this substitution we see that $t^*$ satisfying \eqref{eq:phase1_cond_m} and \eqref{eq:phase2_cond_m} is nearly the same as that satisfying \eqref{eq:phase1_cond} and \eqref{eq:phase2_cond} i.e. it is sufficient to choose $t^*$ only as a function of pure adversarial erasures. However, since Bob has no way to ascertain this, he works with the quantity ${\lambda}_{t^*} - qt^*$ which is an estimate of the number of adversarial erasures that do not conincide with random erasures. Having selected $t^*$, Bob can then finish decoding using the two-phase decoding process of \cite{chen2015characterization} to successfully recover the transmitted message. Further details of the proof are provided in the appendix.

\end{proof}

\subsection{With Transmitter Feedback}
Suppose now that Alice in addition to Calvin has access to Bob's reception perfectly through a separate causal feedback link. This allows Alice to employ \textit{closed-loop} encoding strategies where the input $x_k$ at time $k$ is possibly a function of both the message and Bob's reception thus far $(y_1,y_2,\cdots,y_{k-1})$, i.e.,
\begin{equation}
    \bX_k \sim f_k(\bU,\bY_1,\bY_2,\cdots,\bY_{k-1}) ~~ k=1,2,\cdots,n
\end{equation}
where for each $k$, $f_k$ is either deterministic or, more generally, a probabilistic map defining a conditional distribution $P_{\bX |\bU,\bY_1,\bY_2,\cdots,\bY_{k-1}}$ over $\mathcal{X}$. Calvin is assumed to be causal. He does not know the message but knows the closed-loop encoding (possibly stochastic) maps $\lbrace f_k \rbrace_{k=1}^{n}$ used by Alice. Let the capacity in this case be denoted as $C_f^{E}(p,q)$. We have the following result.

\begin{theorem}
\label{causalBECtcfbk}
The capacity $C_f^E(p,q)$ of BEC($q$)-ADV($p$)-FS with causal feedback to the transmitter is
\begin{equation}\label{causalBECtxfbk_formula}
    C_f^E(p,q) =(1-p)(1-q) \quad \forall ~ 0\leq p \leq 1, 0 \leq q \leq 1 .
\end{equation}
\end{theorem}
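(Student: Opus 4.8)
# Proof Proposal for Theorem~\ref{causalBECtcfbk}

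The plan is to prove matching converse and achievability bounds, both of which should be substantially easier than in the open-loop case because transmitter feedback neutralizes Calvin's snooping advantage. For the \textbf{achievability} direction, the key observation is that with a causal feedback link to Alice, erasures become ``automatically retransmittable.'' The natural strategy is a feedback-based repetition/ARQ-type scheme: whenever a symbol is erased (whether by Calvin or by the BEC), Alice learns this through feedback and resends information. Over $n$ channel uses, Calvin can force at most $pn$ erasures and the BEC independently erases a $q$-fraction of the remaining symbols, so the number of successfully received symbols concentrates around $n(1-p)(1-q)$. One has to be slightly careful: Calvin is adversarial and causal, so a clean way to argue is to have Alice use a capacity-achieving feedback code for a \emph{timing/position} channel, or more simply to note that the effective ``good'' channel uses (those not erased by anyone) number at least $n(1-p)(1-q) - o(n)$ with high probability, and on those uses Alice transmits fresh message bits chosen adaptively based on the feedback-revealed erasure pattern. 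Since Alice sees exactly which positions got through, she can pack $n(1-p)(1-q)-\delta n$ message bits into the unerased positions and Bob, also seeing the erasure pattern, decodes them directly. I expect this to require only a concentration argument (e.g., a union bound over Calvin's at most $2^{O(n)}$ causal strategies combined with a Chernoff bound on the BEC erasures), since Calvin's choice of \emph{which} $pn$ positions to erase does not help him once Alice can adapt.

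For the \textbf{converse}, the bound $C_f^E(p,q) \le (1-p)(1-q)$ should follow from a simple ``i.i.d.\ erasure'' or ``wait-then-erase'' attack combined with a cut-set / Fano argument, and crucially the remark after the theorem statement hints that the right attack is for Calvin to erase in an i.i.d.\ manner. Concretely, let Calvin erase each transmitted symbol independently with probability $p$ (ignoring his snooping ability entirely); this satisfies the weight constraint $\le pn$ with high probability, and by a small truncation argument (cap the total erasures at $pn$, which changes the distribution negligibly) it is a feasible $p$-limited strategy. Under this attack, the composite channel from $x_k$ to $y_k$ is a BEC with erasure probability $p \star_{\text{erasure}} q = 1-(1-p)(1-q)$, i.e.\ each symbol is erased with probability $1-(1-p)(1-q)$, \emph{independently across time and independently of the inputs}. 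For a memoryless erasure channel, feedback does not increase capacity (this is the classical result that feedback does not increase the capacity of a DMC), so the capacity with closed-loop encoding is at most the feedback-free capacity of this BEC, which is exactly $(1-p)(1-q)$. Making this rigorous requires: (i) verifying feasibility of the truncated i.i.d.\ attack; (ii) invoking the no-feedback-gain theorem for the resulting DMC; and (iii) a standard Fano-inequality converse for the BEC. The hardest step here is handling the $p=1/2$ (or $p$ large) boundary cleanly — but note that unlike the open-loop theorem, the formula $(1-p)(1-q)$ stays positive for all $p<1$, so there is no cut-off phenomenon and no special-casing is needed; the i.i.d.\ attack always yields erasure probability strictly below $1$ when $p,q<1$.

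The main obstacle I anticipate is the achievability side, specifically making precise the claim that Alice can adaptively ``fill'' the unerased positions against a \emph{causal adversary}. The subtlety is that at time $k$ Alice knows $y_1,\dots,y_{k-1}$ (hence which of positions $1,\dots,k-1$ were erased) but not the future erasure pattern, so she cannot literally pre-assign message bits to the final set of unerased positions. The clean fix is a sequential scheme: maintain a queue of message bits; at each channel use transmit the head-of-queue bit; if feedback reports an erasure, the bit stays at the head (and is retransmitted); if it reports success, advance the queue. After $n$ uses, the number of bits successfully delivered is the number of non-erased positions, which concentrates as above; padding and a terminal flush handle the final partial symbol. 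Bob reconstructs the queue trajectory from his own erasure observations. The only thing to verify carefully is that the total erasure count being $\le pn$ (Calvin) plus a Chernoff-typical BEC contribution gives delivery of at least $(1-p)(1-q)n - \epsilon n$ bits except with probability $o(1)$, uniformly over all of Calvin's (causal, $p$-limited) strategies — a union bound over strategies is wasteful but suffices, or one can argue more directly that conditioned on Calvin's erasure set of size $\le pn$, the remaining positions each survive the BEC independently with probability $1-q$. I would also remark that this recovers, at $q=0$, the known feedback capacity $1-p$ for causal adversarial erasures, consistent with the literature on coding with feedback.
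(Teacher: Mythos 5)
Your proposal is correct and follows essentially the same route as the paper: the converse is the i.i.d.\ erasure attack (with a small slack $\epsilon$ to guarantee feasibility of the $pn$ budget via Chernoff) reducing the channel to a BEC of erasure probability $1-(1-p)(1-q)$ for which feedback gives no gain, and the achievability is the same ARQ/repeat-until-received scheme with a concentration bound showing the total erasure count is at most $(p+q-pq+\epsilon)n$ w.h.p. Your added care about the causal-adaptivity subtlety and the martingale-style concentration over Calvin's adaptive strategies is a welcome tightening of details the paper states only briefly, but it does not change the argument.
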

\begin{rem}
If Calvin were to simply erase each symbol with probability $p$, the effective channel is a BEC with erasure probability $s=1-(1-p)(1-q)$. This means that the rate is limited to\footnote{For a vanilla DMC such as the BEC, the capacity is the same under deterministic and stochastic encoding \cite{surveyAVC}.} $(1-p)(1-q)$ which matches with the expression in \eqref{causalBECtxfbk_formula}. This implies that \textit{the optimal attack for the adversary is to simply cause i.i.d. erasures. The knowledge of the (closed-loop) encoding scheme or the ability to snoop into Bob's reception does not buy Calvin any benefit.} 
\end{rem}
\begin{proof}
\textbf{Converse:} The converse proof follows the above remark. Fix $\epsilon>0$. Calvin simply erases each symbol with probability $p-\frac{\epsilon}{1-q}$. By the Chernoff bound, the probability that Calvin will run out of his budget of $pn$ erasures decays exponentially with $n$. The combined effect of the adversary and the BEC($q$) then is a BEC with erasure probability $    s=\left(p-\frac{\epsilon}{1-q}\right)(1-q)+q\left (1-\left (p-\frac{\epsilon}{1-q} \right) \right)+\left (p-\frac{\epsilon}{1-q}\right)q=p+q-pq - \epsilon.$ Hence, $ C_f^{E}(p,q) \leq 1-s = (1-p)(1-q) + \epsilon.$ 

\textbf{Achievability :} The achievability scheme is essentially an ARQ type scheme. Alice simply transmits each of the $k$ bits in the message repeatedly until it is successfully received. If $e_\Lambda$ is the total number of erasures (a random quantity) that occur due to both the actions of Calvin and the BEC($q$), Alice needs $n = k + e_\Lambda$ channel uses for this scheme to succeed. Note that at channel use $t$, since Calvin does not know whether the BEC($q$) will introduce an erasure or not, 

we can show by the Chernoff bound that $P\left( e_{\Lambda} \leq ((p+q-pq)+\epsilon)n  \right )$ with probability at least $1-2^{\Omega(n\epsilon^2)}$ and hence, $ C_f^{E}(p,q) \geq (1-p)(1-q)-\epsilon$.

\end{proof}

In Fig. \ref{BECupperbounds}, we plot $C^E(p,q)$ and $C_f^E(p,q)$ as a function of $p$ for $q=0,0.3,0.6$.

\section{Results for Bit-Flips}\label{BFchannel}
In this section, we assume that Calvin can attempt to flip up to $pn$ bits and the random channel is a BSC($q$) instead of a BEC($q$). The input and output alphabets are $\mathcal{X} = \lbrace 0,1 \rbrace$ and $\mathcal{Y} = \lbrace 0,1\rbrace $. At time $k$, Calvin produces $a_k \in \mathcal{A}=\{0,1\}$ based on his side information which is the same as before, i.e., he knows $(x_1,x_2,\cdots, x_k)$, the codebook or the codeword distribution, and $(y_1,y_2,\cdots,y_{k-1})$. The received symbol at time $k$ at the receiver is $y_k = x_k + a_k +1$ with probability $q$ and $y_k = x_k + a_k $ with probability $1-q$ where $+$ denotes mod-$2$ addition and $q \in [0,1/2]$. The constraint on the adversary can be expressed as $weight(a_1,a_2,\cdots,a_n) \leq pn$. In contrast to the erasure case, note that a flip-attempt of Calvin can now be undone by the BSC. No feedback to the transmitter is assumed. For this model denoted BSC($q$)-ADV($p$)-FS, we prove an upper bound and use the result of \cite{chen2015characterization} to provide a simple achievable rate. The gap between the bounds gets larger when $q$ gets larger. Eliminating this gap and proving a tight capacity characterization is left as future work. 
\subsection{An Upper Bound $\overline{C}(p,q)$}
\begin{theorem}\label{flipconv_thm}
The capacity $C(p,q)$ of BSC($q$)-ADV($p$)-FS is bounded as $C(p,q)\leq \overline{C}(p,q)$ where
\begin{equation}\label{drloveresult}
    \overline{C}(p,q) = \min_{ \bar{p}: \bar{p} \in \mathcal{P}} \alpha(p,\bar{p},q) \left (  1 - h_2\left ( \frac{\bar{p}}{\alpha(p,\bar{p},q)} \star q\right)    \right),
\end{equation}
\begin{equation*}
    \alpha(p,\bar{p},q) = 1-4(p-\bar{p}) ~~,~~    \mathcal{P} = \left\lbrace \bar{p} : 0\leq \bar{p} \leq p  \right \rbrace
\end{equation*}
when $p<\frac{1}{4}$. When $p \geq \frac{1}{4}$, $C(p,q)=0$.
\end{theorem}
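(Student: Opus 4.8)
The plan is to prove $C(p,q)=0$ for $p\ge 1/4$ as a limiting case of the $p<1/4$ bound together with monotonicity of capacity in $p$, and to prove the bound for $p<1/4$ by a two-phase \emph{babble-and-push} attack that snoops Bob's reception, parallel to the erasure converse above and to \cite{dey2013upper,bassily2014causal}. Fix $\bar{p}\in\cP=[0,p]$ and write $\alpha=\alpha(p,\bar{p},q)=1-4(p-\bar{p})$; for $p<1/4$ one checks $\alpha\in(0,1]$ and $\bar{p}/\alpha\le 1/2$, so $1-h_2(\tfrac{\bar{p}}{\alpha}\star q)$ is well defined. I will show that for this $\bar{p}$ no rate $R>\alpha\bigl(1-h_2(\tfrac{\bar{p}}{\alpha}\star q)\bigr)+\epsilon$ is achievable; since $\bar{p}$ is arbitrary this gives $C(p,q)\le\overline{C}(p,q)$.

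\textbf{The attack.} Split the block as a prefix of length $\ell=\alpha n$ and a suffix of length $n-\ell=4(p-\bar{p})n$, with $\bX=(\bX_1,\bX_2)$, $\bY=(\bY_1,\bY_2)$. In the \emph{babble} phase Calvin flips each of the first $\ell$ bits independently with probability $\tfrac{\bar{p}}{\alpha}-\delta$, using at most $\bar{p}n$ of his budget w.h.p.\ (Chernoff); composed with BSC($q$) this makes $\bX_1\to\bY_1$ a memoryless BSC of crossover within $\mathcal{O}(\delta)$ of $\tfrac{\bar{p}}{\alpha}\star q$. The data-processing inequality on $\bU\to\bX_1\to\bY_1$ gives
\[ I(\bU;\bY_1)\ \le\ \ell\Bigl(1-h_2\bigl(\tfrac{\bar{p}}{\alpha}\star q\bigr)\Bigr)+\mathcal{O}(\delta n)\ =\ \alpha n\Bigl(1-h_2\bigl(\tfrac{\bar{p}}{\alpha}\star q\bigr)\Bigr)+\mathcal{O}(\delta n)\ <\ n(R-\epsilon)+\mathcal{O}(\delta n), \]
whence $H(\bU\mid\bY_1)\ge n\epsilon/2$ for $\delta$ small, and by Markov's inequality the event $E_1=\{\by_1:H(\bU\mid\bY_1=\by_1)>n\epsilon/4\}$ has constant probability. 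For $\by_1\in E_1$, Calvin snoops $\by_1$ and samples a decoy $\bx'$ from the posterior $P_{\bX\mid\bY_1=\by_1}$; exactly the $t=\Theta(1/\epsilon)$-sample argument of Lemma~\ref{er_distlemma} together with Plotkin's bound applied to the length-$4(p-\bar{p})n$ suffixes shows that, with probability $\epsilon^{\mathcal{O}(1/\epsilon)}$, the decoy carries a message different from the true one and $d_H(\bX_2,\bX'_2)\le 2(p-\bar{p})n+\mathcal{O}(\epsilon n)$, while both $\bX_1$ and $\bX'_1$ are typical for $\by_1$ under the phase-1 channel.

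\textbf{The push.} In the suffix Calvin, now knowing $\bx'$ and seeing $\bx_2$ causally, flips each coordinate on which $x_k\neq x'_k$ independently with probability $1/2$; this costs $\tfrac12 d_H(\bX_2,\bX'_2)\le (p-\bar{p})n+\mathcal{O}(\epsilon n)$ flips, so together with the babble budget it respects $pn$ once the $\mathcal{O}(\epsilon n)$ overhead is charged to $\epsilon$. Because this randomized rule is symmetric under exchanging $\bx_2\leftrightarrow\bx'_2$ and the BSC($q$) is symmetric, the suffix output $\bY_2$ is equidistributed under ``$\bX$ transmitted'' and ``$\bX'$ transmitted''; combined with the exchangeability of $(\bX,\bX')$ given $\bY_1=\by_1$, the full output $\bY$ gives Bob no means to separate the two messages, so he errs with probability $\ge 1/2$ on this pair. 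Chaining the constant lower bounds on $P(E_1)$, on the decoy being ``good'', and on the budget being respected yields $P_e\ge c(\epsilon)>0$ uniformly in $n$, so $C(p,q)\le\alpha(p,\bar{p},q)\bigl(1-h_2(\tfrac{\bar{p}}{\alpha(p,\bar{p},q)}\star q)\bigr)$ for every $\bar{p}\in\cP$; minimizing over $\bar{p}$ gives $C(p,q)\le\overline{C}(p,q)$. For $p=1/4$ the same attack with $\bar{p}\to 0^+$ forces $C(1/4,q)\le 4\bar{p}\bigl(1-h_2(\tfrac14\star q)\bigr)\to 0$; and since a $p$-limited adversary simulates any $p'$-limited adversary with $p'\le p$, capacity is non-increasing in $p$, so $C(p,q)=0$ for all $p\ge 1/4$.

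\textbf{Expected main obstacle.} The information-theoretic bookkeeping (the data-processing step, the high-equivocation event, the sampling/Plotkin argument) transcribes the erasure converse and \cite{bassily2014causal} with only cosmetic changes. The genuinely new point is the push in the presence of \emph{two} noise sources: since a bit Calvin flips may be ``un-flipped'' by the BSC, one must verify that randomized half-pushing makes the suffix output \emph{exactly} exchangeable between the true and decoy codewords \emph{after} the BSC acts, and that the phase-2 budget $(p-\bar{p})n$ together with the Plotkin factor $\tfrac12$ is precisely what fixes the suffix length at $4(p-\bar{p})n$ --- i.e.\ the coefficient $\alpha=1-4(p-\bar{p})$ and the threshold $p<1/4$. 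Confirming $\bar{p}/\alpha\le 1/2$ on all of $\cP$ (which is equivalent to $p<1/4$, matching the zero-capacity regime) and bounding the $\delta$, $\mathcal{O}(\epsilon n)$ slacks are straightforward.
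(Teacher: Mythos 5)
Your proposal is correct and follows essentially the same route as the paper's proof: a babble-and-snoop phase of i.i.d.\ flips at effective rate $\bar{p}$ over a prefix of length $\approx \alpha n$, a data-processing/Markov argument for the high-equivocation event, the $t=\Theta(1/\epsilon)$-sample Plotkin/Markov argument for a distinct, close decoy, and the symmetric half-push making $\bY_2$ exchangeable between $\bx$ and $\bx'$, with $p\geq 1/4$ handled by the $\bar{p}\to 0$ limit and monotonicity. The only difference is bookkeeping: the paper places the slack in the prefix length $\ell=(\alpha+\epsilon/2)n$ (so Plotkin yields $d_H(\bX_2,\bX_2')\leq 2(p-\bar{p})n-\Theta(\epsilon n)$ with strict margin), whereas you keep $\ell=\alpha n$ and shave the babble probability by $\delta$, which works equally well once the constants in $t$, $\delta$, and the Chernoff fluctuation are tuned.
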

\begin{rem}
When $q=0,$ i.e., there is no BSC, the channel model reduces to that considered in \cite{dey2013upper}, and the capacity expression \eqref{drloveresult} matches with the result proved in \cite{dey2013upper}. 
\end{rem}

\begin{proof}
Fix a $\bar{p} \in [0,p]$. Suppose that for some $\epsilon >0 $, the transmitter attempts to communicate at a rate of $R = \alpha(p,\bar{p},q) \left (  1 - h_2\left ( \frac{\bar{p}}{\alpha(p,\bar{p},q)} \star q\right)    \right) +\epsilon$. We show that for sufficiently large $n$, under the proposed attack strategy for Calvin, the probability of decoding error in \eqref{maxerrorprob} is lower bounded by $\epsilon^{O(1/\epsilon)}$, a quantity \textit{independent} of $n$. Since the same argument works for any $\bar{p}$, the result in theorem \ref{flipconv_thm} holds.

Our proof is based on a \textit{babble and snoop, then push} attack inspired, in part from \cite{dey2013upper}. As noted before, the attack in \cite{dey2013upper} does not work if used directly. Let $\bx$ and $\by$ denote the transmitted and received words. Let $\bx_1=(x_1,x_2,\cdots,x_\ell)$ and $\bx_2=(x_{\ell+1},\cdots,x_n)$, where $\ell$ is specified shortly. Similarly, let $\by_1=(y_1,y_2,\cdots,y_\ell)$ and $\by_2=(y_{\ell+1},\cdots,y_n)$. The proposed attack consists of the following two phases:
\begin{itemize}
    \item \textbf{Babble and Snoop}: For the first $\ell=(\alpha(p,\bar{p},q)+\epsilon/2)n$ channel uses, Calvin injects random bit-flips and monitors Bob's reception - at channel use $i$, $1\leq i \leq \ell$, he flips bit $x_i$ with probability $\bar{p}n/\ell$. At the end of this phase, Calvin knows $\bx_1$ and $\by_1$.
    \item \textbf{Push}: Calvin samples a codeword $\bx'$ (corresponding to message $u'$) according to the conditional distribution $P_{\bX|\bY=\by_1}(.|\by_1)$. His goal is to confuse the receiver between $\bx$ and $\bx'$. At positions where $\bx_2$ and $\bx_2'$ agree, he does nothing. Positions $j$ where $\bx_2$ and $\bx_2'$ disagree, he flips $x_j$ with probability $1/2$. This way, the Bob cannot distinguish between $\bx$ and $\bx'$ or messages $u$ and $u'$ (even with the BSC($q$)) due to the fact that $p(\by_2 | \bx_2) = p(\by_2 | \bx_2')$. The proof relies on showing that with a small probability independent of $n$, $u$, $u'$ are distinct and $\bx_2$, $\bx_2'$ are sufficiently close.

\end{itemize}

Note that Calvin requires knowledge of $\bY_1$, i.e., the symbols received by Bob during the first phase of the attack. The presence of the BSC($q$) introduces additional equivocation at the receiver which Calvin is able to exploit to cause a reduction in rate. Here also, \textit{one-time block feedback} (of entire block $\by_1$) after the first $\ell$ channel uses is sufficient for the attack to succeed.

In the babble and snoop phase, by the Chernoff bound, Calvin uses at most $\bar{p}n+\epsilon n /64$ flips with probability at least $1-e^{-\Omega(\epsilon^2n)}$. Let this be denoted as event $E_1$. Conditioned on $E_1$, Calvin's remaining budget in the push phase is atleast $(p-\bar{p})n-\epsilon n / 64$. Define the set 
\begin{equation*}
   A_0 = \left \lbrace  \by_1 :  H(\bU \mid \bY_1 = \by_1) > \frac{n\epsilon}{4}  \right \rbrace. 
\end{equation*}

Defining the event $E_2=\lbrace \bY_1 \in A_0\rbrace$, we have the following lemma.
\begin{lemma}
     $P(E_2) \geq \epsilon /4$.
\end{lemma}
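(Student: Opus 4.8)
The plan is to mirror the proof of the analogous statement in the erasure case (Lemma~1); the only new ingredient is an upper bound on $I(\bU;\bY_1)$ that accounts for the BSC($q$) together with Calvin's random babbling. First I would note that, since no feedback to the transmitter is assumed, $\bX$ (hence the prefix $\bX_1$) is a function of $\bU$ and Alice's private randomness alone, and in the babble-and-snoop phase $\bY_1=\bX_1\oplus\bW_1$, where $\bW_1$ is a length-$\ell$ i.i.d.\ Bernoulli($\delta_\ell$) vector with $\delta_\ell=\frac{\bar{p}n}{\ell}\star q$ (the superposition of Calvin's i.i.d.\ flips at rate $\bar{p}n/\ell$ and the BSC($q$) noise), independent of $(\bU,\bX_1)$. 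Thus $\bU\to\bX_1\to\bY_1$ is a Markov chain through a memoryless BSC($\delta_\ell$), so by the data processing inequality and $H(\bY_1\mid\bX_1)=H(\bW_1)=\ell h_2(\delta_\ell)$,
\[
I(\bU;\bY_1)\le I(\bX_1;\bY_1)=H(\bY_1)-\ell h_2(\delta_\ell)\le \ell\bigl(1-h_2(\delta_\ell)\bigr).
\]

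Next, with $\ell=(\alpha+\tfrac{\epsilon}{2})n$ (write $\alpha=\alpha(p,\bar{p},q)$), the right-hand side equals $n\,g(\alpha+\tfrac{\epsilon}{2})$ for $g(\beta):=\beta\bigl(1-h_2(\tfrac{\bar p}{\beta}\star q)\bigr)$, while $nR=n\bigl(g(\alpha)+\epsilon\bigr)$. The crux is to establish $g(\alpha+\tfrac{\epsilon}{2})\le g(\alpha)+\tfrac{\epsilon}{2}$, i.e.\ $g'(\beta)\le 1$ on $[\alpha,\alpha+\tfrac{\epsilon}{2}]$. A short computation gives $g'(\beta)=1-h_2(s)+(s-q)h_2'(s)$ with $s=\tfrac{\bar p}{\beta}\star q$; but $h_2(s)-(s-q)h_2'(s)$ is precisely the value at $q$ of the tangent line to the concave function $h_2$ at $s$, hence $\ge h_2(q)\ge 0$, so $g'(\beta)\le 1$. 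One checks $\bar p/\beta<\tfrac12$ throughout the interval, since $\beta\ge\alpha=1-4(p-\bar p)>2\bar p$ because $p<\tfrac14$; the degenerate cases $\bar p=0$ and $q=\tfrac12$ are immediate ($g$ is then linear with slope $\le 1$, resp.\ identically $0$). Consequently $I(\bU;\bY_1)\le n(R-\tfrac{\epsilon}{2})$, and since $H(\bU)=nR$,
\[
H(\bU\mid\bY_1)=\mathbb{E}_{\bY_1}H(\bU\mid\bY_1=\by_1)=nR-I(\bU;\bY_1)\ge \tfrac{n\epsilon}{2}.
\]

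Finally I would apply Markov's inequality exactly as in Lemma~1: the nonnegative random variable $nR-H(\bU\mid\bY_1=\by_1)$ has mean at most $nR-\tfrac{n\epsilon}{2}$, so $P\bigl(nR-H(\bU\mid\bY_1=\by_1)>nR-\tfrac{n\epsilon}{4}\bigr)\le \frac{R-\epsilon/2}{R-\epsilon/4}$, whence $P(E_2)=P\bigl(H(\bU\mid\bY_1=\by_1)>\tfrac{n\epsilon}{4}\bigr)\ge \frac{\epsilon/4}{R-\epsilon/4}\ge\tfrac{\epsilon}{4}$, using $R\le 1$ (which we may assume, the bound being vacuous otherwise). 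The main obstacle is the monotonicity step $g'(\beta)\le 1$: the naive estimates $\delta_\ell<\tfrac{\bar p}{\alpha}\star q$ and $\ell>\alpha n$ move $I(\bU;\bY_1)$ in the wrong direction (above $n(R-\epsilon)$), so the $\tfrac{\epsilon}{2}$ of slack needed for the lemma is recovered only through the concavity of $h_2$.
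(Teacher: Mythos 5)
Your proof is correct and follows essentially the same route as the paper: data-processing through the cascade BSC$\left(\frac{\bar p n}{\ell}\right)$--BSC$(q)$, a monotonicity argument for $\beta\mapsto\beta\bigl(1-h_2(\tfrac{\bar p}{\beta}\star q)\bigr)$ (equivalent to the paper's claim that $x\,h_2(\tfrac{\bar p}{x}\star q)$ is increasing), and then Markov's inequality. Your tangent-line/concavity justification of $g'(\beta)\le 1$ is in fact a cleaner and more complete version of the derivative-sign step, which the paper asserts rather tersely.
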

 \begin{proof}
 The proof closely follows claim 4 in \cite{dey2013upper}. Note that $\bU\to \bX_1 \to \bY_1$ is a markov chain and hence, by the data processing inequality and Calvin's actions in the babble phase,
\begin{equation*}
    I(\bU;\bY_1) \leq I(\bX_1;\bY_1) = \ell \left ( 1 - h_2 \left ( \frac{\bar{p}n}{\ell} \star q\right ) \right ). 
\end{equation*}
This is because the channel between $\bX_1$ and $\bY_1$ is now a cascade of $BSC(\bar{p}n/\ell)$ and $BSC(q)$.
Noting that $\ell = (\alpha + \epsilon/2 )n$, 
\begin{equation*}
    I(\bU;\bY_1) \leq   n (\alpha + \epsilon/2 ) \left ( 1 - h_2 \left ( \frac{\bar{p}}{\alpha + \epsilon /2} \star q\right )\right ). 
\end{equation*}
Since $I(\bU,\bY_1)=H(\bU) - H(\bU| \bY_1)$ and $H(\bU) = nR =  n\alpha\left (  1 - h_2\left ( \frac{\bar{p}}{\alpha} \star q\right)    \right) +n \epsilon $, we get,
\begin{multline*}
    H(\bU|\bY_1) \geq \frac{n\epsilon}{2} + \\ n \left ( (\alpha + \epsilon/2)h_2 \left ( \frac{\bar{p}}{\alpha + \epsilon /2} \star q\right ) - \alpha h_2\left(\frac{\bar{p}}{\alpha} \star q\right) \right ).
\end{multline*}
Now, the function $f(x) = xh_2\left ( \frac{\bar{p}}{x} \star q\right)$ is increasing in $x$, for any fixed $q\in(0,1/2)$. To see this, note that
\begin{equation*}
    \frac{ df}{dx} =h_2\left(\frac{\bar{p}}{x} \star q\right) + (2q-1)\frac{\bar{p}}{x} \log_2 \left (\frac{1-\frac{\bar{p}}{x}\star q}{\frac{\bar{p}}{x}\star q} \right ) >0
\end{equation*}
since $\frac{\bar{p}}{x}\star q < 1/2$ and $\log_2\left ( \frac{1-y}{y} \right) = \frac{d}{dy} h_2(y) >0 $ for $y\in(0,1/2)$. Hence, we have $H(\bU|\bY_1) = \mathbb{E}_{\bY_1} H(\bU|\bY_1=\by_1) \geq n\epsilon /2$. Finally, by Markov's inequality,
\begin{equation*}
    P\left ( nR - H(\bU|\bY_1 = \by_1) > nR - n\epsilon/4\right ) \leq 1 - \frac{\epsilon/4}{R - \epsilon/4}
\end{equation*}
which gives as desired,
\begin{equation*}
    P\left ( H(\bU \mid \bY_1 = \by_1) > \frac{n\epsilon}{4}  \right ) \geq \frac{\epsilon}{4}.
\end{equation*}
\end{proof}
Next, define the events $E_3=\lbrace \bU \neq \bU' \rbrace$ and $E_4 = \lbrace d_{H}(\bX_2,\bX_2') \leq 2(p-\bar{p})n - \epsilon n /8  \rbrace$. $E_3$ is the event that the message picked by the adversary to confuse Bob in the push phase is different from the one transmitted. Similarly, event $E_4$ ensures that Calvin's remaining flips are enough to carry his push attack. Using techniques from section A.2 of \cite{bassily2014causal} and claim 6 in \cite{dey2013upper}, we can now show the following. 
\begin{lemma} 
For $\by_1 \in A_0$,
   \begin{equation}\label{bscproofeventbd}
    P(E_3,E_4 \mid  \bY_1 = \by_1) \geq \frac{\epsilon}{48}\left (\frac{\epsilon}{5}\right )^{\frac{12}{\epsilon}-1} = \epsilon^{\mathcal{O}(1/\epsilon)}.
\end{equation}  
\end{lemma}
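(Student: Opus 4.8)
The plan is to mirror, step for step, the lower bound on $P(E_2,E_3\mid\bY_1=\by_1)$ obtained inside the proof of Theorem~\ref{thm_er}: after conditioning on $\bY_1=\by_1$ with $\by_1\in A_0$, the argument is structurally the same, and only the channel acting in the first phase and the value of $\ell$ change. The one structural fact to record up front is that during the babble-and-snoop phase the map $\bX_1\to\bY_1$ is a genuine memoryless channel — the cascade of $\mathrm{BSC}(\bar p n/\ell)$ and $\mathrm{BSC}(q)$ — whose law is fixed and depends neither on $\bU$ nor on anything Calvin snoops. Hence the posterior $P_{\bX\mid\bY_1=\by_1}$ is well defined, the transmitted codeword $\bX$ conditioned on $\bY_1=\by_1$ is itself distributed according to it, and Calvin's draw $\bX'$ is an independent copy; in particular, given $\bY_1=\by_1$, a pair $(\bX^{(i)},\bX^{(j)},\bU_i,\bU_j)$ of independent posterior draws has the same joint law as $(\bX,\bX',\bU,\bU')$. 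Unlike the erasure case there is no consistent set $\mathcal{B}_{\by_1}$ to intersect with, since the cascade BSC gives every codeword strictly positive posterior probability.

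Writing $\tau=2(p-\bar p)n-\epsilon n/8$ for the threshold in $E_4$, the steps are: (1) set $t=12/\epsilon$ and draw $\mathcal{C}_t=\{\bX^{(1)},\dots,\bX^{(t)}\}$ i.i.d.\ from $P_{\bX\mid\bY_1=\by_1}$, with corresponding messages $\bU_1,\dots,\bU_t$; let $E$ be the event that these messages are all distinct, so that $E\subseteq\{\bU_1\neq\bU_2\}$. (2) Invoke the distinctness estimate — the exact analog of Lemma~\ref{er_distlemma}, obtained from the defining inequality $H(\bU\mid\bY_1=\by_1)>n\epsilon/4$ of $A_0$ by the argument of section~A.2 of \cite{bassily2014causal} together with claim~6 of \cite{dey2013upper} — which gives $P(E\mid\bY_1=\by_1)\geq(\epsilon/5)^{t-1}$ for $\by_1\in A_0$ and $n$ large. (3) Apply Plotkin's bound to the $t$ suffixes, whose common length is $n-\ell=(4(p-\bar p)-\epsilon/2)n$ since $\ell=(\alpha+\epsilon/2)n$ and $\alpha=1-4(p-\bar p)$: this gives $d_{avg}(\mathcal{C}_t)\leq\tfrac12\tfrac{t}{t-1}(n-\ell)$, and because $p-\bar p<1/4$ (so $2(p-\bar p)<1/2$) and $t=12/\epsilon$, the right-hand side is $\leq\tau$ with a margin of order $\epsilon n$. (4) Since the $t$ draws are exchangeable even conditioned on $E$ and $\bY_1=\by_1$, one has $\mathbb{E}[d_H(\bX_2^{(1)},\bX_2^{(2)})\mid E,\bY_1=\by_1]=\mathbb{E}[d_{avg}(\mathcal{C}_t)\mid E,\bY_1=\by_1]\leq\tau-\Omega(\epsilon n)$, so Markov's inequality yields $P(d_H(\bX_2^{(1)},\bX_2^{(2)})\leq\tau\mid E,\bY_1=\by_1)\geq c\epsilon$ for an absolute constant $c$; a crude estimate already gives the constant $1/48$ stated in the lemma. (5) Chain everything: using $E\subseteq\{\bU_1\neq\bU_2\}$, $P(d_H(\bX_2^{(1)},\bX_2^{(2)})\leq\tau,\ \bU_1\neq\bU_2\mid\bY_1=\by_1)\geq P(d_H(\bX_2^{(1)},\bX_2^{(2)})\leq\tau\mid E,\bY_1=\by_1)\,P(E\mid\bY_1=\by_1)$, and by the law-equality of the sampled pair with $(\bX,\bX',\bU,\bU')$ the left-hand side equals $P(E_3,E_4\mid\bY_1=\by_1)$, so $P(E_3,E_4\mid\bY_1=\by_1)\geq\tfrac{\epsilon}{48}(\epsilon/5)^{12/\epsilon-1}=\epsilon^{\mathcal{O}(1/\epsilon)}$.

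The main difficulty is bookkeeping, not conceptual: step~(3) must be executed so as to leave a strictly positive $\Omega(\epsilon n)$ gap, because $\tau$ is \emph{exactly} $2(p-\bar p)n-\epsilon n/8$ and a careless Plotkin estimate lands right on the conditional mean, making the Markov step~(4) vacuous; balancing the factor $t/(t-1)$ against the mismatch between the $\epsilon/4$ entering through $n-\ell$ and the $\epsilon/8$ appearing in $\tau$ is precisely why $t$ is taken as large as $12/\epsilon$. Two smaller points deserve care: the law-equality asserted in the first paragraph (which is what lets step~(5) pass back from the sampled pair to $(\bX,\bX')$), and the degenerate regime $p-\bar p<\epsilon/8$ in which $\ell>n$ and the two-phase attack is meaningless — the latter is irrelevant for the $\bar p$ approaching the minimizer in \eqref{drloveresult} once $\epsilon$ is small, and is absorbed by continuity of $\bar p\mapsto\alpha(p,\bar p,q)\bigl(1-h_2(\tfrac{\bar p}{\alpha(p,\bar p,q)}\star q)\bigr)$ on $[0,p]$.
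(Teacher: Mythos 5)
Your proposal follows the paper's own argument essentially step for step: sample $t=12/\epsilon$ codewords from the posterior $P_{\bX\mid\bY_1=\by_1}$, invoke the distinctness bound $(\epsilon/5)^{t-1}$ from \cite{bassily2014causal}, apply Plotkin's bound to the length-$(n-\ell)$ suffixes to get a conditional mean distance with an $\Omega(\epsilon n)$ margin below the threshold $2(p-\bar p)n-\epsilon n/8$, then use exchangeability, Markov's inequality, and $E\subseteq\{\bU_1\neq\bU_2\}$ to chain the bounds, exactly as in the paper. The only additions are explicit statements of points the paper leaves implicit (the law-equality of $(\bX,\bX',\bU,\bU')$ with two independent posterior draws, and the degenerate regime $p-\bar p<\epsilon/8$), which are correct and harmless.
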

\begin{proof}
Consider sampling $t = \frac{12}{\epsilon}$ codewords $\mathcal{C}_t = \left \lbrace \bX^{(1)},\bX^{(2)},\cdots,\bX^{(t)} \right \rbrace$, each codeword sampled according to the conditional distribution $P_{\bX|\bY_1=\by_1}(.|\by_1)$. Let the messages corresponding to the codewords be $\bU_1,\bU_2,\cdots,\bU_t$ and let $E_5$ be the event that $\left \lbrace \bU_1, \bU_2, \cdots \bU_t \text{ are all distinct}\right \rbrace$ i.e. all of the codewords are distinct. We have from proposition 1, section A.2 from \cite{bassily2014causal} that for $\by_1 \in A_0$, for sufficiently large block length $n$,
\begin{equation*}
 P(E_5 \mid  \bY_1 = \by_1 ) \geq \left (\frac{\epsilon}{5}\right )^{t-1}.   
\end{equation*}
The average Hamming distance between the suffixes of codewords in $\mathcal{C}_t$ is defined as 
\begin{equation*}
    d_{avg}(\mathcal{C}_t) = \frac{1}{t(t-1)} \sum_{i\neq j} d_H\left ( \bX_2^{(i)} ,\bX_2^{(j)}\right ).
\end{equation*}
Recall that $\ell=(1-4(p-\bar{p})+\epsilon/2)n$. Conditioning on $E_5$, by Plotkin's bound we have
\begin{equation*}
       d_{avg}(\mathcal{C}_t) \leq \frac{1}{2}\frac{t}{t-1}(n-\ell) \leq 2(p-\bar{p})n - \epsilon n /6.
\end{equation*}
Thus for $\by_1 \in A_0$, we have
\begin{equation*}
    \mathbb{E}(d_{avg}(\mathcal{C}_t) \mid E_5, \bY_1 = \by_1 ) \leq 2(p-\bar{p})n - \epsilon n /6.
\end{equation*}
Now, since all of the $\bX^{(i)}$'s are picked independently, all pairs $(\bX^{(i)}, \bX^{(j)})$ have identical distribution. Thus,
\begin{multline*}
   \mathbb{E}(d_{avg}(\mathcal{C}_t) \mid E_5, \bY_1 = \by_1 ) =  \\ \mathbb{E}(d_H(\bX_2^{(1)},\bX_2^{(2)}) \mid E_5, \bY_1 = \by_1 )    
\end{multline*}
and also
\begin{multline*}
 \mathbb{E}(d_H(\bX_2^{(1)},\bX_2^{(2)}) \mid E_5, \bY_1 = \by_1 ) = \\ \mathbb{E}(d_H(\bX_2,\bX_2') \mid E_5, \bY_1 = \by_1 ).
\end{multline*}
Thus, we have
\begin{equation*}
    \mathbb{E}(d_H(\bX_2^{(1)},\bX_2^{(2)}) \mid E_5, \bY_1 = \by_1 ) \leq  2(p-\bar{p})n - \epsilon n /6.
\end{equation*}
and by Markov's inequality
\begin{multline}\label{flip_markov}
    P(d_H(\bX_2^{(1)},\bX_2^{(2)})>2(p-\bar{p})n - \epsilon n /8 \mid E_5, \bY_1 = \by_1 ) \leq \\ \frac{2(p-\bar{p})n - \epsilon n /6 }{2(p-\bar{p})n - \epsilon n /8 } = 1 - \frac{\epsilon}{48(p-\bar{p})-3\epsilon} \leq 1 - \frac{\epsilon}{48}.
\end{multline}
Thus, 
\begin{multline*}
    P(E_3,E_4 \mid \bY_1 = \by_1 ) = \\ P(d(\bX_2^{(1)},\bX_2^{(2)}) \leq 2(p-\bar{p})n - \epsilon n /8, \bU_1 \neq \bU_2 \mid  \bY_1 = \by_1 ) \\ \geq     P(d(\bX^{(1)}_2,\bX^{(2)}_2)  \leq 2(p-\bar{p})n - \epsilon n /8, E_5 \mid \bY_1 = \by_1 ).
\end{multline*}
where the last inequality holds because event $E_5$ is a subset of the event $\lbrace U_1\neq U_2 \rbrace$. We then have,
\begin{multline*}
     P(d(\bX^{(1)}_2,\bX^{(2)}_2)\leq 2(p-\bar{p})n - \epsilon n /8, E_5 \mid \bY_1 = \by_1 ) = \\
     P(d(\bX^{(1)}_2,\bX^{(2)}_2) \leq 2(p-\bar{p})n - \epsilon n /8 \mid E_5, \bY_1 = \by_1 ) \\ \times P(E_5\mid \bY_1 = \by_1  ).
\end{multline*}
From \eqref{flip_markov}, when $E_1$ occurs i.e. $\by_1 \in A_0$, we get,
\begin{equation*}
        P(E_3,E_4 \mid \bY_1 = \by_1 ) \geq \frac{\epsilon}{48}\left (\frac{\epsilon}{5}\right )^{\frac{12}{\epsilon}-1} = \epsilon^{\mathcal{O}(1/\epsilon)}
\end{equation*}
as we set out to prove. 
\end{proof}
 Now, in the push phase, Calvin injects $Ber(1/2)$ noise at $d_H(\bX_2,\bX_2^')$ positions. Conditioned on $E_1$, Calvin has at least a budget of $(p-\bar{p})n-\epsilon n /64$ bit-flips that remain. If $\ba_2$ is the error vector chosen by Calvin in the push phase, conditioned on $E_3$ and $E_4$ we have $\E  (d_H(\ba_2,\mathbf{0})) = (p-\bar{p})n - \epsilon n /16$. Further by the Chernoff bound, with probability at least $1-2^{-\Omega(\epsilon^2 n )}$, the distance $d_H(\ba_2,\mathbf{0})$ is within $3\epsilon n/ 64$ of its expected value. Let this event be $E_5$. Since $\E  (d_H(\ba_2,\mathbf{0})) + 3\epsilon n/64 = (p-\bar{p})n-\epsilon n /64$, the power constraint is respected w.h.p.. 

When events $E_1,E_3,E_4,E_5$ occur, the probability of decoding error is clearly at least $1/2$ since the receiver cannot distinguish between $\bx$ and $\bx'$. Since $P(E_1)\geq 1-e^{-\Omega(\epsilon^2 n)}$ and $P(E_5)\geq 1-e^{-\Omega(\epsilon^2 n)}$, the bound in \eqref{bscproofeventbd} together with the bound $P(E_2)\geq \epsilon/4$ implies for sufficiently large $n$, the maximum probability of error in \eqref{maxerrorprob} is at least of the order $\epsilon^{O(1/\epsilon)}$, a quantity independent of $n$ and the proof is complete.
\end{proof}

 The solution $\overline{C}(p,q)$ to the optimization problem in \eqref{drloveresult} has the following form:
\begin{itemize}
    \item $\overline{C}(p,q)>0$ for all $p\in[0,1/4)$ and $q \in [0,1/2)$. 
    
    \item For a fixed $q\in[0,1/2)$, there is a $p_0$ (a function of $q$) such that for $p\leq p_0$, $\overline{C}(p,q)$ is convex and equal to ($1-h_2(p \star q)$), which is the capacity when channels BSC($p$) and BSC($q$) are in cascade. Thus when $p\leq p_0$, the babble, snoop, and push strategy outlined here provides no benefit over a simpler adversarial strategy of injecting i.i.d. Ber($p$) bit-flips.
    
    \item It can be shown that the value of $p_0$ is the unique solution (different from $1/2$) of the equation
    \begin{equation}\label{Eqp_0}
        4 + (1+2q)\log_2\left(p_0\star q\right) + (3-2q)\log_2\left(1-p_0 \star q\right) = 0.
    \end{equation}
    
    \item For a fixed $q\in[0,1/2)$, $\overline{C}(p,q)$ for $p_0 \leq p\leq 1/4$ is a decreasing linear function in $p$ that intersects the p-axis at $p = 1/4$. Furthermore, $\overline{C}(p,q)$, $p_0\leq p\leq 1/4$ is in fact the \textit{tangent} to the curve $1-h_2(p \star q)$ at $p=p_0$. 
\end{itemize}
The proof of the above can be found in appendix \ref{solutionform}. In summary, for $q\in [0,1/2)$, we have
\begin{equation*}
    \overline{C}(p,q) = \begin{cases}
        1-h_2(p\star q) & 0\leq p \leq p_0 \\
    \frac{1-4p}{1-4p_0} \left ( 1 - h_2 \left(p_0\star q \right) \right ) & p_0\leq p \leq 1/4 \\
    0& p \geq 1/4
    \end{cases}
\end{equation*}
where $p_0$ is implicitly given by \eqref{Eqp_0}. In Fig. \ref{BSCupperboundsfig}, we plot $\overline{C}(p,q)$ as a function of $p$ for various values of $q$, specifically, $q=0.0,0.1,0.2$. 

\begin{figure}
    \centering
    \includegraphics[width=0.485\textwidth]{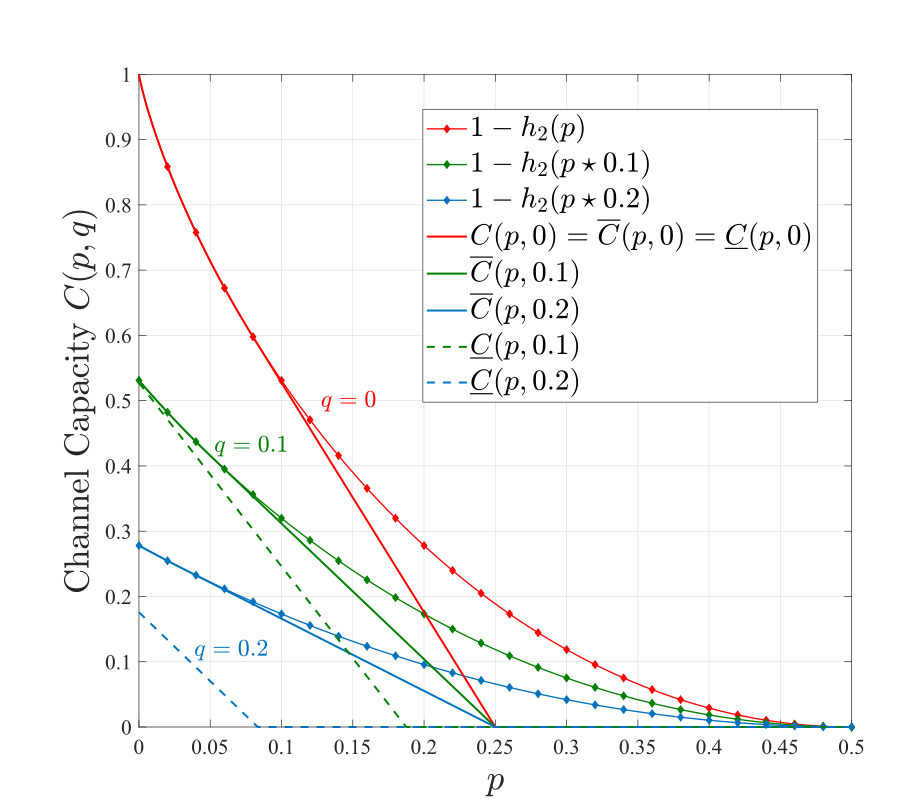}
    \caption{Upper bounds $\overline{C}(p,q)$ and lower bounds $\underline{C}(p,q)$ on the capacity of BSC($q$)-ADV($p$)-FS as a function of $p$. The cut-off value of $p$ beyond which $\overline{C}(p,q)=0$ is $p=1/4$ independent of $q$.}
    \label{BSCupperboundsfig}
\end{figure}

\subsection{An Achievable Rate $\underline{C}(p,q)$} 

\begin{theorem}
\label{achratenfb}
The capacity $C(p,q)$ of BSC($q$)-ADV($p$)-FS is at least $\underline{C}(p,q)=\overline{C}\left ((p \star q),0 \right)$.
\end{theorem}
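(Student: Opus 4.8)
The plan is to reduce the BSC($q$)-ADV($p$)-FS channel to a purely adversarial bit-flip channel with an \emph{effective} adversarial power $p \star q$, and then invoke the achievability half of the result of \cite{chen2015characterization} (the $q=0$ case, whose capacity is $\overline{C}(\cdot,0)$) as a black box. The key observation is that from Bob's point of view, the combined effect of Calvin flipping at most $pn$ bits and an independent BSC($q$) acting on every coordinate is \emph{dominated} by a single adversary who is allowed to flip up to $(p\star q)n$ bits (up to an arbitrarily small slack $\epsilon n$). Indeed, if Calvin injects an error vector $\ba$ with $\mathrm{weight}(\ba)\le pn$ and the BSC contributes an independent vector $\bz$ with each $z_i\sim\mathrm{Ber}(q)$, the total error is $\ba \oplus \bz$, whose weight concentrates: by the Chernoff bound, $\mathrm{weight}(\ba\oplus\bz) \le (\,\mathrm{weight}(\ba)/n \star q + \epsilon\,)n \le (p\star q + \epsilon)n$ with probability $1-2^{-\Omega(\epsilon^2 n)}$, since $x\mapsto x\star q$ is increasing on $[0,1/2]$ and $\mathrm{weight}(\ba)/n\le p$.

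The steps, in order, are as follows. First I would fix $\epsilon>0$ and take the code of \cite{chen2015characterization} designed for a $(p\star q+\epsilon)$-limited \emph{omniscient-free} (causal, in fact oblivious-to-future) bit-flip adversary with no stochastic noise; by the achievability result underlying $\overline{C}(\cdot,0)$, this code has rate at least $\overline{C}(p\star q + \epsilon, 0)$ and probability of error decaying in $n$ against \emph{any} admissible adversarial flip pattern of weight $\le (p\star q+\epsilon)n$. Second, I would run this code verbatim over the BSC($q$)-ADV($p$)-FS channel: Alice encodes as in \cite{chen2015characterization}, Bob decodes as in \cite{chen2015characterization}. Third, I would argue that for \emph{every} causal strategy of Calvin (even one exploiting feedback snooping of $\by_1,\dots,\by_{k-1}$), the realized composite error vector $\be = \ba \oplus \bz$ satisfies $\mathrm{weight}(\be)\le (p\star q+\epsilon)n$ except on an event of probability $2^{-\Omega(\epsilon^2 n)}$ — this is the concentration step above, and it holds regardless of how $\ba$ is chosen because conditioning on the entire past $(x_1,\dots,x_k,y_1,\dots,y_{k-1})$ still leaves $z_k\sim\mathrm{Ber}(q)$ independent of that past, so $\mathrm{weight}(\bz)$ and more generally $\mathrm{weight}(\ba\oplus\bz)$ concentrate as claimed (a standard martingale/Azuma or direct Chernoff argument on the $z_i$'s). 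On that high-probability event the composite error is a legal input for the $q=0$ decoder, so Bob decodes correctly w.h.p.; on the complementary event we just bound the error contribution by its probability $2^{-\Omega(\epsilon^2 n)}\to 0$. Letting $\epsilon\to 0$ and using continuity of $\overline{C}(\cdot,0)$ gives the achievable rate $\overline{C}(p\star q,0)=\underline{C}(p,q)$.

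The main obstacle — and the only place that needs care — is the third step: justifying that feedback snooping does not help Calvin \emph{inflate the weight} of the composite error beyond $p\star q$. The subtlety is that Calvin sees $\by_{k-1}$, hence partial information about past realizations of $\bz$, and one must confirm this cannot be parlayed into a pattern $\ba$ that, combined with the \emph{future} (still-independent) BSC noise, exceeds the $p\star q$ budget with non-negligible probability. The resolution is that future BSC bits $z_k, z_{k+1},\dots$ remain i.i.d.\ $\mathrm{Ber}(q)$ conditioned on everything Calvin knows at time $k$, and the weight of $\ba\oplus\bz$ can be written as $\sum_k \mathbbm{1}(a_k \oplus z_k = 1)$ where, conditioned on the past (which determines $a_k$), the $k$-th term is $\mathrm{Ber}(q)$ if $a_k=0$ and $\mathrm{Ber}(1-q)$ if $a_k=1$; since $\sum_k a_k \le pn$, a Chernoff bound on this sum of conditionally independent Bernoullis with conditional means summing to at most $(p\star q)n$ gives the desired concentration uniformly over all causal (feedback-aided) strategies. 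Everything else is bookkeeping: invoking \cite{chen2015characterization}, taking $\epsilon\to 0$, and noting $\overline{C}(\cdot,0)$ is continuous and non-increasing so the limit is clean.
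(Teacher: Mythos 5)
Your proposal is correct and follows essentially the same route as the paper's proof: bound the weight of the composite error $\ba \oplus \bz$ by $(p\star q+\epsilon)n$ via a Chernoff-type concentration (valid uniformly over causal, feedback-snooping strategies since future BSC noise stays $\mathrm{Ber}(q)$ given Calvin's view), then treat all flips as adversarial and invoke the $q=0$ achievability of \cite{chen2015characterization} at rate $\overline{C}(p\star q+\epsilon,0)$, letting $\epsilon \to 0$. Your write-up merely makes explicit the conditional-Bernoulli concentration step that the paper states in a single sentence.
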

\begin{proof}
From \cite{dey2013upper,chen2015characterization}, $\overline{C}(s,0)$ is a tight characterization of the capacity when there is no BSC present and Calvin has a total budget of $sn$ bit-flips. Since at channel use $k$, Calvin does not know if the BSC will cause a bit-flip, it can be shown by Chernoff bound that the total number of bit-flips is at most $\left((p \star q)+\epsilon\right)n$ with probability at least $1-e^{-\Omega(n\epsilon^2)}$. If we now assume that all of the $\left((p \star q)+\epsilon\right)n$ flips are chosen in an adversarial manner by Calvin, a rate of $\overline{C}((p \star q) + \epsilon,0)$ is achievable. 
\end{proof}
In Fig. \ref{BSCupperboundsfig}, we also plot achievable rates $\underline{C}(p,q)$ for $q=0,0.1,0.2$. As noted before, the gap between upper and lower bounds increases with $q$. 

\section{Conclusion}
In this work, we considered communicating over a stochastic channel (BEC/BSC) in the presence of a powerful adversary who can spy on both communicating terminals and inject further erasures/bit-flips at the input of the channel. For erasures, we gave a complete capacity characterization and for bit-flips, we proved interesting converse and achievability bounds. Future work includes characterizing capacity tightly for bit-flips and for the case where the adversary has no feedback snooping.

\bibliographystyle{IEEEtran}
\bibliography{refs}

\begin{thebibliography}{10}
\providecommand{\url}[1]{#1}
\csname url@samestyle\endcsname
\providecommand{\newblock}{\relax}
\providecommand{\bibinfo}[2]{#2}
\providecommand{\BIBentrySTDinterwordspacing}{\spaceskip=0pt\relax}
\providecommand{\BIBentryALTinterwordstretchfactor}{4}
\providecommand{\BIBentryALTinterwordspacing}{\spaceskip=\fontdimen2\font plus
\BIBentryALTinterwordstretchfactor\fontdimen3\font minus
  \fontdimen4\font\relax}
\providecommand{\BIBforeignlanguage}[2]{{%
\expandafter\ifx\csname l@#1\endcsname\relax
\typeout{** WARNING: IEEEtran.bst: No hyphenation pattern has been}%
\typeout{** loaded for the language `#1'. Using the pattern for}%
\typeout{** the default language instead.}%
\else
\language=\csname l@#1\endcsname
\fi
#2}}
\providecommand{\BIBdecl}{\relax}
\BIBdecl

\bibitem{langberg2009binary}
M.~Langberg, S.~Jaggi, and B.~K. Dey, ``Binary causal-adversary channels,'' in
  \emph{2009 IEEE International Symposium on Information Theory}.\hskip 1em
  plus 0.5em minus 0.4em\relax IEEE, 2009, pp. 2723--2727.

\bibitem{code_online_adv_old}
B.~K. {Dey}, S.~{Jaggi}, and M.~{Langberg}, ``Codes against online
  adversaries,'' in \emph{2009 47th Annual Allerton Conference on
  Communication, Control, and Computing (Allerton)}, 2009, pp. 1169--1176.

\bibitem{improved_upbounds}
B.~K. {Dey}, S.~{Jaggi}, M.~{Langberg}, and A.~D. {Sarwate}, ``Improved upper
  bounds on the capacity of binary channels with causal adversaries,'' in
  \emph{2012 IEEE International Symposium on Information Theory Proceedings},
  2012, pp. 681--685.

\bibitem{dey2013upper}
B.~K. Dey, S.~Jaggi, M.~Langberg, and A.~D. Sarwate, ``Upper bounds on the
  capacity of binary channels with causal adversaries,'' \emph{IEEE
  Transactions on Information Theory}, vol.~59, no.~6, pp. 3753--3763, 2013.

\bibitem{bassily2014causal}
R.~Bassily and A.~Smith, ``Causal erasure channels,'' in \emph{Proceedings of
  the twenty-fifth annual ACM-SIAM symposium on Discrete algorithms}.\hskip 1em
  plus 0.5em minus 0.4em\relax SIAM, 2014, pp. 1844--1857.

\bibitem{chen2015characterization}
Z.~Chen, S.~Jaggi, and M.~Langberg, ``A characterization of the capacity of
  online (causal) binary channels,'' in \emph{Proceedings of the forty-seventh
  annual ACM symposium on Theory of computing}, 2015, pp. 287--296.

\bibitem{error_erasure}
Z.~{Chen}, S.~{Jaggi}, and M.~{Langberg}, ``The capacity of online (causal) $q$
  -ary error-erasure channels,'' \emph{IEEE Transactions on Information
  Theory}, vol.~65, no.~6, pp. 3384--3411, 2019.

\bibitem{online_largea}
B.~K. {Dey}, S.~{Jaggi}, and M.~{Langberg}, ``Codes against online adversaries:
  Large alphabets,'' \emph{IEEE Transactions on Information Theory}, vol.~59,
  no.~6, pp. 3304--3316, 2013.

\bibitem{sufficientlymyopic}
B.~K. Dey, S.~Jaggi, and M.~Langberg, ``Sufficiently myopic adversaries are
  blind,'' \emph{IEEE Transactions on Information Theory}, vol.~65, no.~9, pp.
  5718--5736, 2019.

\bibitem{dey2010codingdelayed}
B.~K. Dey, S.~Jaggi, M.~Langberg, and A.~D. Sarwate, ``Coding against delayed
  adversaries,'' in \emph{2010 IEEE International Symposium on Information
  Theory}.\hskip 1em plus 0.5em minus 0.4em\relax IEEE, 2010, pp. 285--289.

\bibitem{dey2016bitofdelay}
------, ``A bit of delay is sufficient and stochastic encoding is necessary to
  overcome online adversarial erasures,'' in \emph{2016 IEEE International
  Symposium on Information Theory (ISIT)}.\hskip 1em plus 0.5em minus
  0.4em\relax IEEE, 2016, pp. 880--884.

\bibitem{dey2019interplay}
B.~K. Dey, S.~Jaggi, M.~Langberg, A.~D. Sarwate, and C.~Wang, ``The interplay
  of causality and myopia in adversarial channel models,'' in \emph{2019 IEEE
  International Symposium on Information Theory (ISIT)}.\hskip 1em plus 0.5em
  minus 0.4em\relax IEEE, 2019, pp. 1002--1006.

\bibitem{zhang2018covert}
Q.~E. Zhang, M.~Bakshi, and S.~Jaggi, ``Covert communication over adversarially
  jammed channels,'' in \emph{2018 IEEE Information Theory Workshop
  (ITW)}.\hskip 1em plus 0.5em minus 0.4em\relax IEEE, 2018, pp. 1--5.

\bibitem{amita1}
A.~J. {Budkuley} and S.~{Jaggi}, ``Communication over an arbitrarily varying
  channel under a state-myopic encoder,'' in \emph{2018 IEEE International
  Symposium on Information Theory (ISIT)}, 2018, pp. 616--620.

\bibitem{berlekamp1964block}
E.~R. Berlekamp, ``Block coding with noiseless feedback,'' Ph.D. dissertation,
  Massachusetts Institute of Technology, 1964.

\bibitem{lebedev2016coding}
V.~S. Lebedev, ``Coding with noiseless feedback,'' \emph{Problems of
  Information Transmission}, vol.~52, no.~2, pp. 103--113, 2016.

\bibitem{zigangirov1976number}
K.~Zigangirov, ``On the number of correctable errors for transmission over a
  binary symmetrical channel with feedback,'' \emph{Problemy Peredachi
  Informatsii}, vol.~12, no.~2, pp. 3--19, 1976.

\bibitem{csiszarkornerbook}
I.~Csiszar and J.~K{\"o}rner, \emph{Information theory: coding theorems for
  discrete memoryless systems}.\hskip 1em plus 0.5em minus 0.4em\relax
  Cambridge University Press, 2011.

\bibitem{surveyAVC}
A.~{Lapidoth} and P.~{Narayan}, ``Reliable communication under channel
  uncertainty,'' \emph{IEEE Transactions on Information Theory}, vol.~44,
  no.~6, pp. 2148--2177, 1998.

\end{thebibliography}

\appendices
\section{Proof of Theorem \ref{thm_er} : Achievability}
We provide the details of proof for the achievability of $C^E(p,q)$. 

\textbf{Random Code Distribution}: Alice is endowed with a set of private keys for encoding, $\mathcal{S}=\lbrace 1,2,\cdots,2^{nS}\rbrace$. The encoding procedure is carried out in chunks, each of size $n\theta$ where $\theta<1$ is a quantization parameter set to $\theta=\frac{\epsilon}{4}$. We also set $S=\frac{\theta^3}{8}$. Let $\Gamma$ be the uniform distribution over stochastic codes $\mathcal{C}: \mathcal{U} \times \mathcal{S} \to \mathcal{X}^{n\theta}$. Then each chunk $i$, $1\leq i \leq \frac{1}{\theta}$, is associated to a stochastic code $\mathcal{C}_i$ drawn independently from the distribution $\Gamma$.

\textbf{Encoding:} For message $u\in \mathcal{U}$ and keys $s_1,s_2,\cdots,s_{\frac{1}{\theta}}$, the codeword $\bx$ selected for transmission is 
\begin{equation*}
    \bx = \mathcal{C}_1(u,s_1)\circ\mathcal{C}_2(u,s_2)\circ \cdots \mathcal{C}_{\frac{1}{\theta}}(u,s_{\frac{1}{\theta}}),
\end{equation*}
where $\circ$ represents the concatenation operator. We refer to codeword $\mathcal{C}_i(u,s_i)$ as the $i^{th}$ sub-codeword or the $i^{th}$ chunk and the code $\mathcal{C}_i$ as the $i^{th}$ sub-code. Each secret or key $s_i$ for encoding with $\mathcal{C}_i$ is chosen uniformly randomly from $\mathcal{S}$.

We define the set $\mathcal{T}=\lbrace n\theta,2n\theta,\cdots, n-n\theta\rbrace$ containing indices of the chunk ends. For some $t\in\mathcal{T}$ where $t=kn\theta$, we refer to $ \mathcal{C}_1\circ\mathcal{C}_2\circ \cdots \mathcal{C}_{k}$ as the left mega sub-code w.r.t. $t$ and $ \mathcal{C}_{k+1}\circ\mathcal{C}_2\circ \cdots \mathcal{C}_{\frac{1}{\theta}}$ as the right mega sub-code w.r.t. $t$. Accordingly, the concatenation of the first $k$ sub-codewords is be referred to as the left mega sub-codeword w.r.t. $t$, and that of the last $\frac{1}{\theta}-k$ sub-codewords is referred to as right mega sub-codeword w.r.t. $t$. We shall also denote the key sequences used to encode the left and the right mega-subcodewords as $s_{left}=(s_1,s_2,\cdots,s_k)$ and $s_{right}=(s_{k+1},s_{k+2},\cdots,s_{\frac{1}{\theta}})$.

\textbf{Estimation Bounds}: We review some simple bounds we use later to show that decoding succeeds w.h.p.. Recall from before that $\lambda^{a}_t$ denotes the number of erasures added by Calvin up until time $t$. By the Chernoff bound we have then that, for $\delta>0$ to be set later, the total number of erasures that Bob observes at time $t$ satisfies $$\lambda_t \in \left [\lambda^{a}_t + (t-\lambda^{a}_t)(q-\delta),\lambda^{a}_t + (t-\lambda^{a}_t)(q+\delta)  \right] $$ with probability at least $P_\delta = 1-2^{\Omega(\epsilon^2 n)}$. Thus, $\hat{\lambda}_t = \lambda_t-qt$, Bob's estimate of the number of adversarial erasures that do not coincide with BEC($q$) erasures satisfies w.h.p. $$\hat{\lambda}_t \in \left[  \lambda^a_t(1-q+\delta) - \delta t ,  \lambda^a_t(1-q-\delta) + \delta t \right].$$
Recall that with the modified list-decoding and list refinement conditions, Bob selects the smallest value of $t^*$ satisfying
\begin{equation} \label{eq:phase1_cond_m_2}
{\lambda}_{t^*} - qt^*  \leq t^*(1-q)(1-\theta) - Rn 
\end{equation}
and 
\begin{equation} \label{eq:phase2_cond_m_2}
np(1-q) - ({\lambda}_{t^*}- qt^*)  \leq \frac{(n-t^*)(1-q)(1-\theta)}{2}.
\end{equation}
We further restrict the choice of $t^*$ so that we must have $t^* \in \mathcal{T}$ i.e. $t^*$ must correspond to a chunk end. Now, from the preceding discussion, w.h.p. we have that 
\begin{multline}\label{eq:hatlambda_range}
    {\lambda}_{t^*} - qt^* =  \hat{\lambda}_{t^*}  \in  [  \lambda^a_{t^*}(1-q+\delta) - \delta t^* , \\ \lambda^a_{t^*}(1-q-\delta) + \delta t^* ].
\end{multline}
Let $\mathcal{Z} = \left[  \lambda^a_{t^*}(1-q+\delta) - \delta t^* ,  \lambda^a_{t^*}(1-q-\delta) + \delta t^* \right]$. By a similar analysis as in {\cite[Claim B.3]{chen2015characterization}}, we can show existence of $t^*$ that satisfies both \eqref{eq:phase1_cond_m_2} and \eqref{eq:phase2_cond_m_2} for any realization of $\hat{\lambda}_{t^*} \in \mathcal{Z}$.

\textbf{Calvin's Unused Budget}: We now prove an upper bound on the number of adversarial erasures that Calvin is left with to add on to the right mega sub-codeword. Since the total budget is $pn$, the remaining number erasures is $pn-\lambda^a_{t^*}$. From \eqref{eq:phase2_cond_m_2} and \eqref{eq:hatlambda_range}, for any $\hat{\lambda}_{t^*} \in \mathcal{Z}$, we have
\begin{equation*}
  pn-\lambda^a_{t^*} \leq \frac{(n-t^*)(1-\theta)}{2} + \frac{\delta(t-\lambda^a_{t^*})}{1-q}.
\end{equation*}
Choosing for instance $\delta = \frac{1}{16}(1-q)\theta^2$, we can show that
\begin{equation}\label{eq:Remaining_flips}
  pn-\lambda^a_{t^*} \leq (n-t^*)\left ( \frac{1}{2}-\frac{7\theta}{16}\right ).
\end{equation}

\textbf{List Decoding}: From {\cite[Claims B.5-B.7]{chen2015characterization}}, for sufficiently large $n$, we have that with probability at least $P_\delta\left ( 1 - \frac{1}{n}\right ) \geq \frac{1}{2}\left ( 1 - \frac{1}{n}\right )$, the size of the list of messages $\mathcal{L}$ obtained by Bob in the list-decoding phase is at most a constant, i.e. $|\mathcal{L}| < C/\epsilon$ for some constant C.

\textbf{List Refinement}: For some chunk end $t\in \mathcal{T}$ where $t=kn\theta$, $\by_1=(y_1,y_2,\cdots,y_t)$ and $\by_2=(y_{t+1},\cdots,y_n)$ are the left mega received word and the right mega received word w.r.t. $t$ respectively. Consider the list of messages $\mathcal{L}$ obtained by Bob by list-decoding the left mega received word $\by_1$. Let $u^*$ be the true message chosen by Alice for transmission and let $\mathcal{L}(u^*)$ be the set of all possible right mega sub-codewords w.r.t $t$ for each message in $\mathcal{L} \setminus \lbrace u^*\rbrace$ i.e.
\begin{multline*}
    \mathcal{L}(u^*) = \lbrace \mathcal{C}_{k+1}(u,s_{k+1})\circ\mathcal{C}_{k+2}(u,s_{k+2})\circ \cdots \mathcal{C}_{\frac{1}{\theta}}(u,s_{\frac{1}{\theta}}) : \\ u \in \mathcal{L} , u \neq u^*, (s_{k+1},\cdots,s_{1/\theta})\in \mathcal{S}^{\frac{1}{\theta}-k}   \rbrace.
\end{multline*}
For notational convenience, also enumerate $\mathcal{L}(u^*)$ containing codewords of length $(n-t)$ as $\mathcal{L}(u^*)=\lbrace \bw_1, \bw_2,\cdots,\bw_{|\mathcal{L}(u^*)|} \rbrace$. The right mega-subcodeword for the true message is 
\begin{multline*}
 \bx_2(s_{right},u^*) = \\ \mathcal{C}_{k+1}(u^*,s_{k+1})\circ\mathcal{C}_{k+2}(u^*,s_{k+2})\circ \cdots \mathcal{C}_{\frac{1}{\theta}}(u^*,s_{\frac{1}{\theta}})
\end{multline*} which we emphasize is a function of the specific realization of $s_{right=}(s_{k+1},\cdots,s_{\frac{1}{\theta}})$ during encoding.

We would like our code design to satisfy the following distance condition
\begin{equation}\label{eq:dist_bound}
    d_H\left ( \bx_2(s_{right},u^*), \bw_j  \right) \geq (n-t)\left ( \frac{1}{2}-\frac{3\theta}{8}\right ) ~ ~\forall \bw_j \in \mathcal{L}(u^*).
\end{equation}
Equation \eqref{eq:dist_bound} is a key property that guarantees successful decoding. It ensures that the right mega sub-codeword for the transmitted message is sufficiently far in Hamming distance from the right mega sub words for any of the other messages in list $\mathcal{L}$. We show that \eqref{eq:dist_bound} indeed occurs w.h.p., for almost all possible sequence of secrets $s_{right}$.

\begin{lemma}({Modified from \cite[Claims B.11-B.14]{chen2015characterization}})
For sufficiently large $n$, with probability at least $1-2^{-n}$, a code drawn from the random ensemble satisfies the following property : for every chunk end $t \in \mathcal{T}$, for every message $u^*$, and every list $\mathcal{L}$ of size at most $O(1/\epsilon)$, we have that \eqref{eq:dist_bound} holds for at least a $(1-2^{-nS/4})$ portion of all possible secret sequences $s_{right}$.
\end{lemma}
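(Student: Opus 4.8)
The plan is a first-moment argument over the random code ensemble combined with a bounded-differences concentration step, adapting the argument behind \cite[Claims B.11--B.14]{chen2015characterization} to the present suffix-distance setting. Fix a chunk end $t=kn\theta\in\mathcal T$, a message $u^{*}$, and a candidate list $\mathcal L$ with $|\mathcal L|\le C/\epsilon$. The first thing to exploit is an independence split: the right mega sub-codeword $\bx_{2}(s_{right},u^{*})$ is a function of the sub-codewords $\{\mathcal C_{i}(u^{*},s):k<i\le 1/\theta,\ s\in\mathcal S\}$ only, whereas the competitor pool $\mathcal L(u^{*})$ is a function of $\{\mathcal C_{i}(u,s):k<i\le 1/\theta,\ u\in\mathcal L\setminus\{u^{*}\},\ s\in\mathcal S\}$ only; since $u^{*}\notin\mathcal L\setminus\{u^{*}\}$ and the ensemble draws all $\mathcal C_{i}(u,s)$ independently, these two families are statistically independent. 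I would therefore condition on the realized value of $\mathcal L(u^{*})$ --- a fixed set of at most $|\mathcal L|\,|\mathcal S|^{1/\theta-k}\le (C/\epsilon)\,2^{(n-t)S/\theta}$ strings of length $n-t\ge n\theta$ --- and study the random variable $F$ equal to the fraction of secret sequences $s_{right}\in\mathcal S^{1/\theta-k}$ for which \eqref{eq:dist_bound} \emph{fails}, i.e.\ for which $\bx_{2}(s_{right},u^{*})$ lands within Hamming distance $(n-t)(\tfrac12-\tfrac{3\theta}{8})$ of some member of $\mathcal L(u^{*})$.

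The second step bounds $\mathbb{E}[F]$. For a fixed $s_{right}$ and a fixed competitor $\bw_{j}\in\mathcal L(u^{*})$, the distance $d_{H}(\bx_{2}(s_{right},u^{*}),\bw_{j})$ is $\mathrm{Binomial}(n-t,1/2)$ over the remaining ensemble randomness --- the two arguments come from disjoint, independently drawn rows of the ensemble --- so a Chernoff bound gives $\Pr[d_{H}<(n-t)(\tfrac12-\tfrac{3\theta}{8})]\le 2^{-(n-t)\left(1-h_{2}(1/2-3\theta/8)\right)}$. A union bound over the at most $(C/\epsilon)\,2^{(n-t)S/\theta}$ competitors yields $\mathbb{E}[F]=\Pr_{s_{right},\,\mathrm{code}}[\,\mathrm{fail}\,]\le (C/\epsilon)\,2^{(n-t)S/\theta-(n-t)(1-h_{2}(1/2-3\theta/8))}$. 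Here the choice $S=\theta^{3}/8$ is exactly what makes the exponent negative: $S/\theta=\theta^{2}/8$, while the standard bound $1-h_{2}(1/2-x)\ge \tfrac{2}{\ln 2}x^{2}$ with $x=3\theta/8$ gives $1-h_{2}(1/2-3\theta/8)\ge\tfrac{9\theta^2}{32\ln 2}>\theta^{2}/8$; consequently $\mathbb{E}[F]\le 2^{-\Omega((n-t)\theta^{2})}\le 2^{-\Omega(n\theta^{3})}$, which is far smaller than the target threshold $2^{-nS/4}=2^{-n\theta^{3}/32}$ once $n$ is large.

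The third and crucial step upgrades ``$\mathbb{E}[F]$ tiny'' to ``$F\le 2^{-nS/4}$ with overwhelming probability''. Markov's inequality alone gives only $\Pr[F>2^{-nS/4}]\le 2^{-\Omega(n\theta^{3})}$, a single exponential in $n\theta^{3}=n\epsilon^{3}/64$ that is \emph{not} small enough to survive the union over the $2^{nR}$ choices of $u^{*}$ and the roughly $2^{O(nR/\epsilon)}$ choices of $\mathcal L$. Instead I would apply McDiarmid's bounded-differences inequality to $F$ viewed as a function of the independent sub-codewords $\mathcal C_{i}(u^{*},s)$, $k<i\le 1/\theta$, $s\in\mathcal S$: replacing any one such sub-codeword alters $\bx_{2}(s_{right},u^{*})$ only for those $s_{right}$ whose $i$-th coordinate equals $s$ --- a $1/|\mathcal S|=2^{-nS}$ fraction of all secret sequences --- hence changes $F$ by at most $2^{-nS}$. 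Summing the squared increments over the $(1/\theta-k)|\mathcal S|\le \theta^{-1}2^{nS}$ relevant sub-codewords, McDiarmid gives $\Pr[F\ge \mathbb{E}[F]+\tau]\le \exp\!\left(-\Omega\!\left(\theta\,\tau^{2}\,2^{nS}\right)\right)$. Taking $\tau=2^{-nS/4}-\mathbb{E}[F]=\Theta(2^{-nS/4})$ (legitimate since $\mathbb{E}[F]\ll 2^{-nS/4}$) yields $\Pr[F> 2^{-nS/4}]\le \exp\!\left(-\Omega\!\left(\theta\,2^{nS/2}\right)\right)$, a \emph{doubly} exponential bound; this dominates the union over all $t\in\mathcal T$ (at most $1/\theta$), all $u^{*}\in\mathcal U$, and all lists $\mathcal L$, so the code fails to satisfy the stated property with probability at most $\theta^{-1}\,2^{nR}\,2^{O(nR/\epsilon)}\exp(-\Omega(\theta\,2^{nS/2}))\le 2^{-n}$ for $n$ sufficiently large.

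The main obstacle is precisely this concentration: the strings $\{\bx_{2}(s_{right},u^{*})\}_{s_{right}}$ are far from independent --- two of them agree on an entire chunk whenever their secret sequences coincide in the corresponding coordinate --- so $F$ cannot be treated as an empirical average of i.i.d.\ indicators and a direct Chernoff bound on $F$ is unavailable. The bounded-differences route circumvents this, but it works only because a single-coordinate change touches a mere $2^{-nS}$ fraction of the $|\mathcal S|^{1/\theta-k}$ secret sequences, and the tiny increment $2^{-nS}$ must be reconciled with the $2^{-nS/4}$-scale deviation --- this is where the interplay of $S=\theta^{3}/8$, the number of chunks $1/\theta$, and the distance slack $3\theta/8$ has to be tracked carefully. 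Once this lemma is in hand, the distance bound \eqref{eq:dist_bound} combined with the residual-erasure bound \eqref{eq:Remaining_flips} (Calvin's leftover budget $(n-t^{*})(\tfrac12-\tfrac{7\theta}{16})$ is strictly smaller than $(n-t^{*})(\tfrac12-\tfrac{3\theta}{8})$, leaving $\Omega((n-t^{*})\theta)$ disagreeing positions un-erased by Calvin, each surviving the BEC with positive probability) guarantees that list refinement removes every wrong message in $\mathcal L$ except with negligible probability, completing the achievability argument.
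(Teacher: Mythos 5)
Your proposal is correct, and its first half coincides with what the paper actually writes out. Conditioning on the competitors' sub-codewords (which involve only messages in $\mathcal{L}\setminus\{u^*\}$, hence are independent of the $u^*$ rows), noting that $\bx_2(s_{right},u^*)$ is uniform on $\{0,1\}^{n-t}$, bounding the per-secret failure probability by $|\mathcal{L}(u^*)|\,2^{-(n-t)(1-h_2(1/2-3\theta/8))}$ with $|\mathcal{L}(u^*)|\le (C/\epsilon)2^{(n-t)S/\theta}$, and playing $S/\theta=\theta^2/8$ against $1-h_2(1/2-3\theta/8)\ge \frac{9\theta^2}{32\ln 2}$ is precisely the paper's forbidden-region computation with $\eta=\theta^2/4$ (counting Hamming balls around the words of $\mathcal{L}(u^*)$ is the same estimate as your binomial tail plus union bound). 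Where you genuinely diverge is the amplification step, i.e.\ passing from ``the expected fraction of bad secret sequences is $2^{-\Omega(n\theta^3)}$'' to ``the fraction exceeds $2^{-nS/4}$ only with probability small enough to survive the union over all chunk ends, all $2^{nR}$ messages and all $2^{O(nR/\epsilon)}$ lists'': the paper does not prove this itself but defers it to Claims B.12--B.14 of \cite{chen2015characterization}, whereas you give a self-contained argument via McDiarmid's inequality applied to $F$ as a function of the independent sub-codewords $\mathcal{C}_i(u^*,s)$, with bounded difference $2^{-nS}$ per coordinate and at most $\theta^{-1}2^{nS}$ coordinates, giving $\Pr[F>2^{-nS/4}]\le \exp\left(-\Omega\left(\theta\,2^{nS/2}\right)\right)$. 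Your diagnosis that plain Markov only yields $2^{-\Omega(n\theta^3)}$ and therefore cannot absorb the list/message union is exactly the difficulty the cited claims are there to handle (the reference does it through a counting argument exploiting independence across secrets rather than bounded differences); your route reaches the same doubly-exponential bound, is short and transparent, and is legitimate here because the conditional law of the $u^*$ rows given the competitors is still i.i.d.\ uniform, so McDiarmid applies conditionally with the same constants. The numerics also check out: $\mathbb{E}[F]\le \mathrm{poly}\cdot 2^{-(n-t)\left(\frac{9}{32\ln 2}-\frac18\right)\theta^2}$ with $n-t\ge n\theta$ is indeed far below $2^{-nS/4}=2^{-n\theta^3/32}$, so taking $\tau=\Theta(2^{-nS/4})$ is valid. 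In short: same core estimate as the paper, and a correct, more explicit replacement for the portion the paper handles by citation.
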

\begin{proof}
Given a sequence of secrets $s_{right}=(s_{k+1},\cdots,s_{\frac{1}{\theta}})$, message $u^*$ and list $\mathcal{L}$, we first show that \eqref{eq:dist_bound} holds w.h.p.. Let radius $r = \left( \frac{1}{2}-\frac{3\theta}{8}\right ) $. We surround each word $\bw_j \in \mathcal{L}(u^*)$ with a Hamming ball of radius $r$ and the union of all the balls is the so called forbidden region.

For \eqref{eq:dist_bound} to hold, we must have that $\bx_2(s_{right},u^*)$ is outside all these balls, i.e. outside the forbidden region. Due to the code construction, $\bx_2(s_{right},u^*)$ is uniformly distributed over all possible binary vectors of length $(n-t)$ and thus it is enough to bound the size of the forbidden region. If the size of the list $\mathcal{L}$ is $L$, the size of $\mathcal{L}(u^*)$ is at most $L.2^{nS\left( \frac{1}{\theta} - \frac{t}{n\theta}\right)}$. Hence the number of codewords in the forbidden region is at most
\begin{equation*}
    L.2^{nS\left( \frac{1}{\theta} - \frac{t}{n\theta}\right)} \sum_{j=0}^r \binom{n-t}{j} < 2^{(n-t)\left( \frac{\log_2 L}{n-t} + \frac{S}{\theta} + h_2\left( \frac{1}{2} - \frac{3\theta}{8}\right) \right)}.
\end{equation*}
From the Taylor expansion of function $h_2(x)$ in a neighborhood of $1/2$, we have 
\begin{equation*}
\begin{split}
    h_2\left( \frac{1}{2} - \frac{3\theta}{8}\right) &< 1 - \frac{1}{2\ln(2)} \left( 1 - 2\left( \frac{1}{2}- \frac{3\theta}{8}\right)  \right)^2 \\
    &= 1 - \frac{9\theta^2}{32\ln(2)}
\end{split}
\end{equation*}
Let $\eta = \frac{\theta^2}{4}$. For sufficiently large $n$, we have 
\begin{multline*}
    \left( \frac{\log_2 L}{n-t} + \frac{S}{\theta} + h_2\left( \frac{1}{2} - \frac{3\theta}{8}\right) \right) < \\  \left( \frac{\log_2 L}{n-t} + \frac{S}{\theta} + \left(1 - \frac{9\theta^2}{32\ln(2)}\right) \right) <1-\eta.
    \end{multline*}
Hence, the total number of codewords in the forbidden region is at most $2^{(n-t)(1-\eta)}$ and we have
\begin{multline*}
    P\Big(\bx_2(s_{right},u^*) \text{ is outside the forbidden region} \Big) > \\ \frac{2^{(n-t)} - 2^{-(n-t)(1-\eta)}}{2^{n-t}} = 1 - 2^{-(n-t)\eta}.
\end{multline*}
From here on, the rest of the steps in the proof follow exactly the analysis of Claims B.12-B.14 in \cite{chen2015characterization}.
\end{proof}

\textbf{Success of Unique Decoding}: From the preceding discussion, there exists a code in our random ensemble that satisfies the following simultaneously:
\begin{itemize}
    \item For $t^*$ satisfying \eqref{eq:phase1_cond_m_2} and \eqref{eq:phase2_cond_m_2}, the size of the list $\mathcal{L}$ obtained during list decoding is at most $C/\epsilon$ for some constant $C$. Further, the transmitted message $u^*$ is inside list $\mathcal{L}$.
    \item For almost all possible realizations of secret sequences $s_{right}$, the right mega codeword corresponding to message $u^*$ denoted $\bx_2(s_{right},u^*)$, is at least $(n-t^*)\left ( \frac{1}{2}-\frac{3\theta}{8}\right )$ away from any codeword in the set $\mathcal{L}(u^*)$.
\end{itemize}

Recall from \eqref{eq:Remaining_flips} that with probability at least $1-2^{-\Omega(\epsilon^2 n)}$, Calvin has at most $pn-\lambda^a_{t^*} \leq (n-t^*)\left ( \frac{1}{2}-\frac{7\theta}{16}\right )$ erasures that remain. Now, consider any arbitrary codeword $\bw_j \in \mathcal{L}(u^*)$ that is associated with message $u' \neq u^*$. Clearly, if Calvin wishes to confuse Bob between $u'$ and $u^*$, the best strategy is to add all erasures at positions where $\bw_j$ and $\bx_2(s_{right},u^*)$ disagree. However, this still leaves at least $(n-t^*)\left ( \frac{1}{2}-\frac{3\theta}{8} \right ) -(n-t^*)\left ( \frac{1}{2}-\frac{7\theta}{16}\right ) =(n-t^*) \frac{\theta}{16} $ positions where $\bw_j$ and $\bx_2(s_{right},u^*)$ disagree but no adversarial erasures are added. 

The only way that Bob is unable to distinguish between $\bw_j$ and $\bx_2(s_{right},u^*)$ and hence makes a decoding error is when the BEC($q$) erases all of the $(n-t^*) \frac{\theta}{16} $ bits from $\bx_2(s_{right},u^*)$ that Calvin could not erase. However, by the Chernoff bound, this event occurs with probability at most $2^{-\Omega(n\theta^2)}=2^{-\Omega(n\epsilon^2)}$. By a union bound over all $\bw_j \in \mathcal{L}(u^*)$, we have that a decoding error occurs with exponentially small probability. Thus, Bob succeeds in determining the transmitted message $u^*$ and the proof is complete.

 \section{Form of $\overline{C}(p,q)$} \label{solutionform}
 Fix a $q\in[0,1/2)$. The optimization problem \eqref{drloveresult} in Theorem \ref{flipconv_thm} is
 \begin{equation}\label{optprobKq}
  \min_{0\leq x \leq p} f(x) 
\end{equation}
where 
\begin{equation*}
    f(x) = \left(1-4p+4x\right) \left (  1 - h_2\left ( \frac{x}{1-4p+4x} \star q \right)    \right). 
\end{equation*}
When $p=1/4$, $f(x)=0$ at $x=0$ and hence  $\overline{C}(p,q)=0$ when $p=1/4$. Differentiating the objective function in \eqref{optprobKq},
\begin{equation*}
      \frac{d }{dx} \left ( \left(1-4p+4x\right) \left (  1 - h_2\left ( \frac{x}{1-4p+4x} \star q \right) \right) \right) = 0
\end{equation*}
we get,
\begin{multline*}
     4 + (2q+1)\log_2 \left ( \frac{x(1+2q)+q(1-4p)}{1-4p+4x}\right ) +    (3- \\ 2q)  \log_2 \left ( \frac{1-4p+4x-x(1+2q)-q(1-4p)}{1-4p+4x}\right ) = 0.
\end{multline*}
Solution $x^*$ has the form $x^*=\frac{1-4p}{\alpha-3}$ where $\alpha$ satisfies
\begin{multline*}
    4 + (1+2q)\log_2 \left ( \frac{1-q(1-\alpha)}{1+\alpha} \right )+ \\
    (3-2q)\log_2 \left( \frac{\alpha+q(1-\alpha)}{1+\alpha} \right)= 0. 
\end{multline*}
Since $0\leq x \leq p$, we must have $\frac{1-4p}{\alpha-3} \leq p \implies p \geq \frac{1}{1+\alpha} = p_0$. Thus, for $p \in [p_0,1/4]$, the minimizer in \eqref{optprobKq} is $x^*= \frac{(1-4p)p_0}{1-4p_0}$ where $p_0$ satisfies 
\begin{equation}\label{p_0cond}
        4 + (1+2q)\log_2\left(p_0\star q\right) + (3-2q)\log_2\left(1-p_0 \star q\right) = 0,
    \end{equation}
and the capacity upper bound becomes 
\begin{equation*}
\begin{split}
    \overline{C}(p,q) & = \frac{1-4p}{1-4p_0}\left( 1-h_2\left (  \frac{\frac{p_0}{1-4p_0}}{1 + \frac{p_0}{1-4p_0}} \star q\right )\right) \\
    & = \frac{1-4p}{1-4p_0}\left( 1-h_2(p_0 \star q)\right) 
\end{split}.
\end{equation*}
Thus, $ \overline{C}(p,q)$, $p_0\leq p \leq 1/4$ is a straight line that intersects the $p$-axis at $p=1/4$. For $p\in[0,p_0]$, the minimizer in \eqref{optprobKq} is $x^*=p$ and the expression is
\begin{equation*}
     \overline{C}(p,q) =  1-h_2(p\star q).
\end{equation*}
Next we show that, $ \overline{C}(p,q)$, $p_0\leq p \leq 1/4$ is in fact the tangent to the curve $1-h_2(p\star q)$ at $p=p_0$. Consider the line $L(p)$ that is tangent to $1-h_2(p\star q)$ and passes through $(1/4,0)$. Its equation can be written as $L(p) = C(1-4p)$ where $C$ is a constant. Suppose that $L(x)$ intersects $1-h_2(p \star q)$ at $p=\tilde{p_0}$. To complete the proof, it suffices to show that $\tilde{p_0} = p_0$ i.e. $\tilde{p_0}$ satisfies \eqref{p_0cond}. Since $L(p)$ is the tangent to $1-h_2(p\star q)$ at $p=\tilde{p_0}$, we have
\begin{equation*}
    \frac{d}{dp}L(p)\Big|_{p=\tilde{p_0}} = \frac{d}{dp}(1-h_2(p\star q))\Big|_{p=\tilde{p_0}}
\end{equation*}
which gives
\begin{equation}\label{tanproof1}
    -4C = (1-2q)\log_2 \left (\frac{\tilde{p_0}\star q}{1-\tilde{p_0}\star q} \right).
\end{equation}
We also have 
\begin{equation}\label{tanproof2}
    L(\tilde{p_0}) = 1-h_2(\tilde{p_0}\star q)=C(1-4\tilde{p_0}).
\end{equation}
Eliminating the constant $C$ from \eqref{tanproof1} and \eqref{tanproof2}, $\tilde{p_0}$ satisfies the equation
\begin{equation*}
    (1-2q)\log_2 \left (\frac{\tilde{p_0}\star q}{1-\tilde{p_0}\star q} \right) = -4\left (\frac{1-h_2(\tilde{p_0}\star q)}{1-4\tilde{p_0}}\right ).
\end{equation*}
Rearranging the terms,
\begin{multline*}
    \left ( (1-2q)(4\tilde{p_0}-1) -(\tilde{p_0}\star q)\right)\log_2(\tilde{p_0}\star q) - \\   \left ((1-2q)(4\tilde{p_0}-1) - (1-(\tilde{p_0}\star q))\right )\log_2(1 - \tilde{p_0}\star q) = 4
\end{multline*}
which simplifies to
\begin{equation*}
 4 + (1+2q)\log_2\left(\tilde{p_0}\star q\right) + (3-2q)\log_2\left(1-\tilde{p_0} \star q\right) = 0 
\end{equation*}
which is the same as \eqref{p_0cond}. Hence, $p_0=\tilde{p_0}$ and the claim holds.

\end{document}